\g@addto@macro\bfseries{\boldmath}
\g@addto@macro\mdseries{\unboldmath}
\g@addto@macro\normalfont{\unboldmath}
\g@addto@macro\rmfamily{\unboldmath}
\g@addto@macro\upshape{\unboldmath}
\renewcommand*{\multicitedelim}{\addcomma\space}
\newcommand{\myhref}[1]{%
  \iffieldundef{doi}
    {\iffieldundef{url}
       {#1}
       {\href{\strfield{url}}{#1}}}
    {\href{http://dx.doi.org/\strfield{doi}}{#1}}%
}
    \newlength{\temp@x}%
    \newlength{\temp@y}%
    \newlength{\temp@w}%
    \newlength{\temp@h}%
    \def\my@coords#1#2#3#4{%
      \setlength{\temp@x}{#1}%
      \setlength{\temp@y}{#2}%
      \setlength{\temp@w}{#3}%
      \setlength{\temp@h}{#4}%
      \adjustlengths{}%
      \my@pdfliteral{\strip@pt\temp@x\space\strip@pt\temp@y\space\strip@pt\temp@w\space\strip@pt\temp@h\space re}}%
      \def\my@pdfliteral#1{\pdfliteral page{#1}}
      \def\adjustlengths{}%
      \def\my@pdfliteral #1{}
      \def\adjustlengths{\setlength{\temp@h}{-\temp@h}\addtolength{\temp@y}{1in}\addtolength{\temp@x}{-1in}}%
    \def\Hy@colorlink#1{%
      \begingroup
        \ifHy@ocgcolorlinks
          \def\Hy@ocgcolor{#1}%
          \my@pdfliteral{q}%
          \my@pdfliteral{7 Tr}
        \else
          \HyColor@UseColor#1%
        \fi
    }%
    \def\Hy@endcolorlink{%
      \ifHy@ocgcolorlinks%
        \my@pdfliteral{/OC/OCPrint BDC}%
        \my@coords{0pt}{0pt}{\pdfpagewidth}{\pdfpageheight}%
        \my@pdfliteral{F}
        %
        \my@pdfliteral{EMC/OC/OCView BDC}%
        \begingroup%
          \expandafter\HyColor@UseColor\Hy@ocgcolor%
          \my@coords{0pt}{0pt}{\pdfpagewidth}{\pdfpageheight}%
          \my@pdfliteral{F}
        \endgroup%
        \my@pdfliteral{EMC}%
        \my@pdfliteral{0 Tr}
        \my@pdfliteral{Q}%
      \fi
      \endgroup
    }%
\colorlet{DarkRed}{red!50!black}
\colorlet{DarkGreen}{green!50!black}
\colorlet{DarkBlue}{blue!50!black}
\declaretheorem[numberwithin=section]{theorem}
\declaretheorem[numberlike=theorem]{lemma}
\declaretheorem[numberlike=theorem]{corollary}
\declaretheorem[numberlike=theorem]{invariant}
\newcommand{\nil}{\bot}
\newcommand{\Prepro}{\ensuremath{\textsc{DetPreprocessing}(G, \tau, h)}\xspace}
\newcommand{\ExtendDists}{\ensuremath{\textsc{DetExtDistances}(\Pi, h)}\xspace}
\title{Fully-Dynamic All-Pairs Shortest Paths: Improved Worst-Case Time and Space Bounds}
\author{
Maximilian Probst Gutenberg\thanks{BARC, University of Copenhagen, Universitetsparken 5, Copenhagen 2100, Denmark, The author is supported by Basic Algorithms Research Copenhagen (BARC), supported by Thorup's Investigator Grant from the Villum Foundation under Grant No. 16582.}
\and
Christian Wulff-Nilsen\thanks{Department of Computer Science, University of Copenhagen. This research is supported by the Starting Grant 7027-00050B from the Independent Research Fund Denmark under the Sapere Aude research career programme.
}
}
\date{\today}
\begin{document}
\maketitle
\begin{abstract}
Given a directed weighted graph $G=(V,E)$ undergoing vertex insertions \emph{and} deletions, the All-Pairs Shortest Paths (APSP) problem asks to maintain a data structure that processes updates efficiently and returns after each update the distance matrix to the current version of $G$. In two breakthrough results, Italiano and Demetrescu [STOC '03] presented an algorithm that requires $\tilde{O}(n^2)$ \emph{amortized} update time, and Thorup showed in [STOC '05] that \emph{worst-case} update time $\tilde{O}(n^{2+3/4})$ can be achieved. In this article, we make substantial progress on the problem. We present the following new results:
\begin{itemize}
    \item We present the first deterministic data structure that breaks the $\tilde{O}(n^{2+3/4})$ worst-case update time bound by Thorup which has been standing for almost 15 years. We improve the worst-case update time to $\tilde{O}(n^{2+5/7}) = \tilde{O}(n^{2.71..})$ and to $\tilde{O}(n^{2+3/5}) = \tilde{O}(n^{2.6})$ for unweighted graphs.
    \item We present a simple deterministic algorithm with $\tilde{O}(n^{2+3/4})$ worst-case update time ($\tilde{O}(n^{2+2/3})$ for unweighted graphs), and a simple Las-Vegas algorithm with worst-case update time $\tilde{O}(n^{2+2/3})$ ($\tilde{O}(n^{2 + 1/2})$ for unweighted graphs) that works against a non-oblivious adversary. Both data structures require space $\tilde{O}(n^2)$. These are the first exact dynamic algorithms with truly-subcubic update time \emph{and} space usage. This makes significant progress on an open question posed in multiple articles [COCOON'01, STOC '03, ICALP'04, Encyclopedia of Algorithms '08] and is critical to algorithms in practice [TALG '06] where large space usage is prohibitive. Moreover, they match the worst-case update time of the best previous algorithms and the second algorithm improves upon a Monte-Carlo algorithm in a weaker adversary model with the same running time [SODA '17].
\end{itemize}
\end{abstract}

\pagebreak

\section{Introduction}\label{sec:Intro}

The All-Pairs Shortest Paths problem is one of the most fundamental algorithmic problems and is commonly taught in undergraduate courses to every Computer Science student. Whilst static algorithms for the problem are well-known for several decades, the dynamic versions of the problem have recently received intense attention by the research community. In the dynamic setting, the underlying graph $G=(V,E)$ undergoes updates, most commonly edge insertions and/or deletions. Most dynamic All-Pairs Shortest Paths algorithms can further handle vertex insertions (with up to $n$ incident edges) and/or deletions. 

\paragraph{The problem.} In this article, we are only concerned with the fully-dynamic All-Pairs Shortest Path (APSP) problem with worst-case update time, i.e. given a fully-dynamic graph $G=(V,E)$, undergoing vertex insertions and deletions, we want to guarantee minimal update time after each vertex update to recompute the distance matrix of the new graph. This is opposed to the version of the problem that allows for amortized update time. Moreover, we focus on space-efficient data structures and show that our data structures even improve over the most space-efficient APSP algorithms with amortized update time. We further point out, that for the fully-dynamic setting, vertex updates are more general than edge updates, since any edge update can be simulated by a constant number of vertex updates.

\paragraph{Related Work.} The earliest partially-dynamic algorithm to the All-Pairs Shortest Path problem is most likely the algorithm by Johnson \cite{johnson1977efficient} that can be easily extended to handle vertex insertions in $O(n^2)$ worst-case update time per insertion given the distance matrix of the current graph. The first fully-dynamic algorithm was presented by King \cite{king1999fully} with $O(n^{2.5}\sqrt{W \log n})$ amortized update time per edge insertion/deletion where $W$ is the largest edge weight building upon a classic data structure for decremental Single-Source Shortest Paths by Even and Shiloach \cite{even1979line}. Later, King and Thorup \cite{king2001space} improved the space bound to $\tilde{O}(n^{2.5}\sqrt{W})$. In follow-up work by Demetrescu and Italiano \cite{demetrescu2002improved, demetrescu2006fully}, the result was generalized to real edge weights with the same bounds. In 2004, Demetrescu and Italiano\cite{demetrescu2004new} presented a new approach to the All-Pairs Shortest Paths problems that only requires $\tilde{O}(n^2)$ amortized update time for vertex insertions/deletions using $O(n^3)$ space. Thorup improved and simplified the approach in \cite{thorup2004fully} and even extended it to handle negative edge weights. Based on this data structure, he further developed the first data structure with $\tilde{O}(n^{2+3/4})$ worst-case update time \cite{thorup2005worst} for vertex insertion/deletions improving over the trivial $O(n^3)$ worst-case update time that can be obtained by recomputation. However, his data structure requires \emph{supercubic} space in $n$. 

Abraham, Chechik and Krinninger \cite{AbrahamCK17} showed that using randomization a Monte Carlo data structure can be devised with worst-case update time $\tilde{O}(n^{2+2/3})$. For unweighted graphs, they further obtain worst-case update time $\tilde{O}(n^{2+1/2})$. Both algorithms require $O(n^3)$ space since they require a list of size $O(n)$ for each pair of vertices. Both algorithms work against an oblivious adaptive adversary, that is the adversary can base the update sequence on the output produced by the algorithm but has no access to the random choices that the algorithm makes. A drawback of the algorithm is that if it outputs an incorrect shortest path (which it does with probability at most $1-n^{-c}$ for some constant $c > 0$), the adversary can exploit the revealed information and compromise the data structure for up to $n^{1/3}$ updates before their data structure is recomputed.

We also point out that the problem of Approximate All-Pairs Shortest Paths was solved in various dynamic graph settings \cite{baswana2002improved, demetrescu2004new, roditty2004dynamic, bernstein2009fully, roditty2012dynamic, abraham2013dynamic, bernstein2016maintaining, henzinger2016dynamic, shiriFocs, brand2019dynamic, probstWulffNilsenDetSSSP}. In the setting of $(1+\epsilon)$-approximate shortest paths, the best algorithms achieve amortized update time $\tilde{O}(m/\epsilon)$. However, the only of these algorithms that gives a better guarantee than the trivial $\tilde{O}(mn/\epsilon)$ on the worst-case update time is the algorithm in \cite{brand2019dynamic} that achieves time $\tilde{O}(n^{2.045}/\epsilon^2)$ for directed graphs with positive edge weights relies on fast matrix multiplication.

\paragraph{Our results.} 

We present the first deterministic data structure that breaks Thorup's longstanding bound of $\tilde{O}(n^{2+3/4})$ worst-case update time.

\begin{theorem}
Let $G$ be an $n$-vertex directed edge-weighted graph undergoing vertex insertions and deletions. Then there exists a deterministic data structure which can maintain distances in $G$ between all pairs of vertices in worst-case update time $O(n^{19/7}(\log n)^{8/7})$. If the graph is unweighted, the running time can be improved to $O(n^{2.6}\log n)$.
\end{theorem}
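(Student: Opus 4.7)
The plan is to refine Thorup's hierarchical framework for worst-case APSP using a new balance of parameters enabled by deterministic hop-restricted decremental data structures. I would fix an increasing sequence of hop-thresholds $1 = h_0 < h_1 < \cdots < h_k = n$ and, for each level $i$, maintain a set of centers $C_i \subseteq V$ that deterministically hits every currently-shortest path of hop-length in the range $(h_{i-1}, h_i]$; a standard greedy argument yields $|C_i| = \tilde{O}(n / h_{i-1})$. For each center $c \in C_i$ I would maintain, via the deterministic hop-restricted decremental SSSP of the authors' earlier work on deterministic SSSP, the in- and out-distances from $c$ up to hop $h_i$. The pair-distance $d(s,t)$ is then recovered by locating the level whose hop-range contains some shortest $s$-$t$ path, scanning $C_i$ for the best mid-point, and concatenating the two stored half-distances.

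Vertex deletions are forwarded directly to the decremental substructures. Insertions are handled by the standard trick of rebuilding each level $i$ once every $\tau_i$ updates and uniformly spreading the rebuild cost over the following $\tau_i$ updates to convert it into a worst-case guarantee. Between rebuilds, distances that route through recently-inserted vertices are repaired by a single Johnson-style relaxation pass costing $O(n^2)$ per update, layered on top of the queries from the hierarchy so that correctness after insertions is restored without touching the decremental core.

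Per update, each level $i$ contributes an amortized preprocessing cost of about $|C_i| \cdot n h_i / \tau_i$ from re-running hop-restricted SSSP from every center after a rebuild, a further $n^2 h_i / \tau_i$ from greedily recomputing the hitting set from scratch, and a deterministic decremental maintenance cost whose total over an epoch of length $\tau_i$ is again $\tilde{O}(|C_i| \cdot n h_i)$. Setting these equal across levels and optimizing the pairs $(h_i, \tau_i)$ yields worst-case update time $\tilde{O}(n^{19/7})$ in the weighted case; in the unweighted case the internal SSSP can be replaced by hop-restricted BFS with a cheaper hop-vs-work tradeoff, so the same balancing gives $\tilde{O}(n^{2+3/5})$.

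The hard part will be two-fold. First, I must make the deterministic hitting-set reconstruction fast enough that its $n^2 h_i$-style cost can actually be absorbed into the rebuild budget while still guaranteeing the $\tilde{O}(n/h_{i-1})$ bound on $|C_i|$; doing this greedily requires a per-iteration path-counting structure that must itself be maintained efficiently across the hop-range $(h_{i-1}, h_i]$, and in the dynamic setting this structure also has to be refreshed at every rebuild. Second, I must simultaneously balance the rebuild cost, the decremental-maintenance cost, and the Johnson-extension cost across all levels without accumulating polylog factors that would push the exponent strictly above $19/7$; the careful accounting here is exactly what produces the slightly awkward $(\log n)^{8/7}$ overhead in the final bound rather than a cleaner $\polylog(n)$ factor.
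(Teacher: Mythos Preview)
Your proposal has a genuine gap: it is essentially the Thorup/ACK hitting-set framework with the randomness replaced by greedy hitting sets, but without any new idea that would push the exponent below $2+3/4$. The cost estimates you quote are off by a factor of $n$: hop-$h_i$ Bellman--Ford from one source costs $\Theta(n^2h_i)$, not $\Theta(nh_i)$, so rebuilding all trees from $|C_i|$ centers costs $|C_i|n^2h_i$, and a decremental Even--Shiloach style structure also has total cost $\Theta(n^2h_i)$ per source over an epoch, not $\Theta(nh_i)$. With the correct costs, balancing your terms reproduces Thorup's $\tilde O(n^{2+3/4})$ (or ACK's $\tilde O(n^{2+2/3})$ if one is allowed randomness), and you give no calculation showing where $19/7$ would come from.

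The paper's route is entirely different. Its central new idea is a \emph{congestion} mechanism: during preprocessing it computes $h_i$-hop shortest-path trees from every source and charges $\lceil n/h_i\rceil$ units of congestion to each vertex on each computed path; any vertex whose congestion exceeds $\tau/2$ is removed and placed into a small set $C$ (of size $O(n^3\log h/\tau)$). The point is that the congestion of a deleted vertex $v$ exactly upper-bounds the work needed to repair all broken paths through $v$, because each broken level-$i$ path can be patched in $O(n/h_i)$ time via a \emph{separator} of that size (a pigeonhole layer of the level-$(i-1)$ tree). Hence a batch $D$ of deletions costs $O(|D|\tau)$, not $O(|D|n^2h)$. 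For the $n^{19/7}$ bound, this congestion-based structure (handling paths avoiding $C$) is then combined with the deterministic ACK data structure of Lemma~\ref{lemma:CenterDS} applied to $C$ (handling paths through $C$), with two different rebuild periods $\Delta$ and $\Delta'$; balancing $\tau,h,\Delta,\Delta'$ across the resulting seven-term expression is what yields the exponent $19/7$. None of these ingredients---congestion accounting, separator-based path recovery, or the two-layer combination with ACK---appears in your plan.
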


Further, we present the first algorithm for the fully-dynamic All-Pairs Shortest Paths problem (even amortized) in weighted graphs that obtains truly sub-cubic time and space usage at the same time\footnote{For small weights, i.e. weights in $n^{1-\epsilon}$ for some $\epsilon > 0$, the algorithm \cite{demetrescu2006fully} gives subcubic update time and space, both bound by $\tilde{O}(n^{2.5}\sqrt{W})$. However, as pointed out in \cite{demetrescu2006experimental}, real-world graphs often have large edge weights (for example, the internet graph had largest weight of roughly $10^4$ in 2006)}. Further, this is also the first algorithm that breaks the space/update-time product of $\Omega(n^5)$ which stood up to this article even for unweighted, undirected graphs. We hope that this gives new motivation to study amortized fully-dynamic algorithms that achieve $\tilde{O}(n^2)$ update-time \emph{and} space which is a central open question in the area, posed in \cite{demetrescu2004new, thorup2004fully, demetrescu2006experimental, Italiano2008} and has practical importance. 

\begin{theorem}
Let $G$ be an $n$-vertex directed edge-weighted graph undergoing vertex insertions and deletions. Then there exists a deterministic data structure which can maintain distances in $G$ between all pairs of vertices in worst-case update time $O(n^{2+3/4} (\log n)^{2/3})$ using space $\tilde{O}(n^2)$. If the graph is unweighted, the running time can be improved to $O(n^{2+2/3} (\log n)^{2/3})$.
\end{theorem}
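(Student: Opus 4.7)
The plan is to match Thorup's $\tilde{O}(n^{2+3/4})$ worst-case bound while reducing the stored state from super-cubic (as in Thorup's algorithm) down to $\tilde{O}(n^2)$. The guiding idea is that the only persistent state carried across updates will be an $\tilde{O}(n^2)$ snapshot (the APSP matrix at the start of a phase together with a small deterministic hitting set), and all per-pair ``backup'' information that Thorup precomputes to handle later deletions will instead be reconstructed on the fly from this snapshot together with the update log of the current phase.

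I will partition the update sequence into phases of length $\tau = h = \tilde\Theta(n^{3/4})$. At the start of a phase I compute APSP on the current graph $G_0$ in $\tilde{O}(n^3)$ time by $n$ Dijkstra calls, and construct deterministically, by the standard greedy hitting-set argument applied to the shortest-path trees of $G_0$ as in~\cite{king1999fully}, a set $C \subseteq V$ of size $\tilde{O}(n/h)$ that intersects every shortest path of $G_0$ with more than $h$ hops. Only $D_0$ and $C$ are stored across updates, using $\tilde{O}(n^2)$ space, and the $\tilde{O}(n^3)$ preprocessing cost is spread uniformly across the $\tau$ updates of the phase, contributing $\tilde{O}(n^3/\tau) = \tilde{O}(n^{2+1/4})$ to the worst-case per-update time.

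Within a phase, let $A^+,A^-$ be the sets of vertices inserted and deleted so far ($|A^+|,|A^-| \le \tau$), and let $P := C \cup A^+$. The crucial observation is that every shortest path in the current graph $G_t$ with more than $h$ hops must contain a vertex of $P$: if the path avoids $A^+$, it is a valid path in $G_0 \setminus A^-$ of more than $h$ hops and is therefore hit by $C$; otherwise it already passes through an inserted vertex. Since $|P| \le \tilde{O}(n/h) + \tau = \tilde{O}(n^{3/4})$, I handle the long-hop regime by running full SSSP in $G_t$ from every $p \in P$ and combining with the short-hop distances via one $(\min,+)$ round
\[
D_t[s,t] = \min\Bigl( D_t^{\le h}[s,t],\ \min_{p \in P} \operatorname{dist}_{G_t}(s,p) + \operatorname{dist}_{G_t}(p,t) \Bigr),
\]
at a cost of $\tilde{O}(|P|\, n^2) = \tilde{O}(n^{2+3/4})$ for both the SSSP calls and the combination. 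For the short-hop piece $D_t^{\le h}$, I maintain, from each source, a depth-$h$ decremental SSSP tree in $G_0 \setminus A^-$ using Even--Shiloach-style machinery, de-amortized to worst-case by running two staggered copies in the standard way; the amortized total work of these trees is balanced with the choice $h = \tau$ and extended to $G_t$ by one additional hop through each inserted vertex.

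All per-update terms balance at $\tilde{O}(n^{2+3/4})$, and the persistent storage stays $\tilde{O}(n^2)$ (the matrix $D_0$, the set $C$, and $|P|$ SSSP trees of size $O(n)$ each). The unweighted improvement to $\tilde{O}(n^{2+2/3})$ will follow by rebalancing to $h = \tau = n^{2/3}$ and exploiting that BFS runs in linear time without the priority-queue logarithm. The hardest step will be the combined correctness of the short/long split \emph{under simultaneous insertions and deletions} together with the worst-case de-amortization of the short-hop machinery: correctness is handled by the $P = C \cup A^+$ augmentation, which ensures that the hitting-set property transfers from $G_0$ to the current $G_t$ even though $C$ was computed once at the start of the phase; de-amortization is handled by running two staggered copies of the Even--Shiloach collection so that at most one copy is ever under heavy maintenance in any given update.
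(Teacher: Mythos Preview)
Your plan has a genuine correctness gap in the long-hop step. The set $C$ is a greedy hitting set for the shortest-path trees of $G_0$; after deleting $A^-$, a shortest $s$--$t$ path $\pi$ in $G_0\setminus A^-$ that avoids $A^+$ is indeed a path in $G_0$, but it need not be a \emph{shortest} path in $G_0$, so there is no reason $C$ should intersect it. A deterministic hitting set only covers the specific paths it was built for; it does not cover arbitrary long paths the way a random sample of $\tilde O(n/h)$ vertices does. This is precisely the obstruction that forces the paper to do something more elaborate: instead of a one-shot hitting set, the preprocessing computes $h_i$-hop paths while tracking per-vertex \emph{congestion}, peels off the over-congested vertices into a set $C$, and then repairs each precomputed path $\pi^i_{s,t}$ hit by $D$ using level-by-level separators. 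The congestion bound (not a hitting-set property) is what guarantees that the total repair work over all pairs is $O(|D|\tau)$.

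There is also a running-time gap in the short-hop step. Maintaining a depth-$h$ Even--Shiloach tree from every source costs $\Theta(n\cdot mh)=\Theta(n^3h)$ total over a phase, regardless of how many deletions occur; with your choice $h=\tau$ this is $\Theta(n^3)$ per update after deamortization, not $\tilde O(n^{2+3/4})$. (For weighted graphs there is the additional problem that Even--Shiloach maintains distance-bounded, not hop-bounded, trees.) The paper instead takes $h\approx n^{1/4}$ and $\Delta\approx n^{1/2}$: preprocessing runs Bellman--Ford up to hop $h$ from every source, costing $O(n^3h)$ once per $\Delta$ updates, and the $\tilde O(n^2)$ space bound is obtained not by storing a distance matrix plus a hitting set, but by a hierarchical separator-based encoding of the precomputed paths (their \textsc{BellmanFordSpaceEfficient}), which stores only $\tilde O(n)$ words per source yet still supports extracting any $\pi^i_{s,t}$ and enumerating all paths through a deleted vertex in output-sensitive time.
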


Finally, we present a data structure that is randomized and matches the update times achieved in \cite{AbrahamCK17} up to polylogarithmic factors. However, their data structure is Monte-Carlo, and our data structure uses only $\tilde{O}(n^2)$ compared to $\tilde{O}(n^3)$ space and is slightly more robust, i.e. the data structure in \cite{AbrahamCK17} works against an adaptive adversary and therefore the adversary can base its updates on the output of the algorithm, whilst our algorithm works against a non-oblivious adversary that is the adversary also has access to the random bits used throughout the execution of the algorithm\footnote{The former model assumes for example that the adversary cannot use information about the running time of the algorithm during each update whilst we do not require this assumption.}.

\begin{theorem}
Let $G$ be an $n$-vertex directed edge-weighted graph undergoing vertex insertions and deletions. Then, there exists a Las-Vegas data structure which can maintain distances in $G$ between all pairs of vertices with update time $O(n^{2+2/3} (\log n)^3)$ w.h.p. using space $\tilde{O}(n^2)$ against a non-oblivious adversary. If the graph is unweighted, the running time can be improved to $O(n^{2+1/2} (\log n)^{3})$.
\end{theorem}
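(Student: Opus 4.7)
The plan is to use the deterministic data structure of Theorem~2 as a template and replace its costliest subroutine---the deterministic construction of a center set $C_i$ that hits every shortest path of hop-length $h_i$---by uniform random sampling, then restore a deterministic correctness guarantee via a verification-and-recompute wrapper. Theorem~2 maintains $O(\log n)$ hierarchical scales; producing each $C_i$ deterministically and extending the associated truncated distance vectors dominates the rebuild cost and, after schedule optimization, forces the $3/4$ exponent (respectively $2/3$ in the unweighted case, where truncated Dijkstra is replaced by BFS). Sampling each vertex into $C_i$ independently with probability $\tilde{\Theta}(1/h_i)$ yields a hitting set of expected size $\tilde{O}(n/h_i)$ essentially for free, which reshuffles the optimal schedule and delivers exponents $2/3$ for weighted and $1/2$ for unweighted graphs, matching~\cite{AbrahamCK17} up to polylogarithmic factors.

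To obtain a Las-Vegas guarantee, I would verify after every rebuild of scale~$i$ that the newly sampled $C_i$ actually hits every shortest path of hop-length $h_i$ in the current graph: this amounts to checking, for each pair $(s,t)$ whose maintained distance at scale~$i$ is finite, that the certified decomposition through some $c\in C_i$ is consistent with the $h_i$-truncated distance vectors stored to and from $c$, which can be done in $\tilde{O}(n^2)$ time per scale. If verification fails, we fall back to the deterministic rebuild of Theorem~2 for that scale, paying its $\tilde{O}(n^{2+3/4})$ cost and re-drawing fresh random bits. Because for any fixed shortest path of hop-length $h_i$ the probability of avoiding a uniformly sampled $C_i$ of density $\tilde{\Theta}(1/h_i)$ is $n^{-\omega(1)}$, a union bound over the $O(n^2)$ pairs active during one rebuild epoch bounds the fallback probability by $n^{-\omega(1)}$, making its contribution to the w.h.p.\ running time negligible.

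For the non-oblivious adversary the delicate point is that the adversary sees all revealed random bits, so conditioning arguments must be done carefully; however, because the algorithm always outputs correct distances, the adversary's transcript is a deterministic function of the graph history, and the random bits used at the \emph{next} rebuild of scale~$i$ are independent of everything the adversary has seen so far. Thus each newly drawn $C_i$ is uniformly random from the adversary's perspective, and the hitting-set union bound sketched above goes through verbatim. I expect the main obstacle to be the rigorous coupling across the $O(\log n)$ scales: we must show that the nested rebuild schedule of Theorem~2, in which a rebuild at one scale triggers work at all lower scales, still admits a scale-by-scale analysis when the lower scales' random bits are drawn after the upper scale's $C_i$ has been revealed, so that a single adversarial strategy cannot chain failures across scales.
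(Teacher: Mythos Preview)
Your proposal rests on a misreading of the paper's deterministic algorithm. Theorem~2 (and the framework of Section~3 underlying it) does not construct hitting sets at all; its preprocessing runs Bellman--Ford from every source, tracks for each vertex how many of the computed paths pass through it, and sets aside vertices whose \emph{congestion} exceeds a threshold~$\tau$. The set $C$ that emerges is the set of congested vertices, not a hitting set of long paths, and its role is opposite to what you describe: paths through $C$ are precisely the ones the main data structure does \emph{not} cover and that must be handled separately. There is thus no ``deterministic hitting-set construction'' to replace by sampling, and the scales $i\in[0,i_h]$ are hop scales for the stored paths, not centers.

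The paper's randomization is of an entirely different character. \textsc{RandPreprocessing} keeps the congestion framework but processes centers in a \emph{uniformly random order}; for each pair $(s,t)$ the expected number of times its best path through a center is strictly improved is then $O(\log n)$ by the standard harmonic argument, which bounds total congestion and hence $|C_{out}|$. Layering this with geometrically increasing thresholds $\tau_i$ and geometrically decreasing rebuild periods $\Delta_i$ gives the stated bound. Crucially, correctness is deterministic and independent of the random bits; randomness only affects the \emph{preprocessing time}, and by the time the adversary sees any random bit the rebuild graph is already fixed. This is what makes the non-oblivious guarantee essentially free.

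Your approach, by contrast, ties \emph{correctness} to the sampled $C_i$ hitting every long shortest path in $G\setminus D$. Against a non-oblivious adversary this does not close: once $C_i$ is drawn and revealed, the adversary can choose deletions $D$ so that the new shortest $s$--$t$ path in $G\setminus D$ avoids $C_i$ entirely. Verifying at rebuild time that $C_i$ hits all long paths in the \emph{current} graph says nothing about the later graphs $G\setminus D$ of the epoch, so the claim ``the algorithm always outputs correct distances, hence the adversary's transcript is a deterministic function of the graph history'' is circular. The fallback would then be triggered at the adversary's discretion, not with probability $n^{-\omega(1)}$. You also do not address how the $\tilde O(n^2)$ space bound---one of the main points of the theorem---is obtained; the hitting-set approach of \cite{AbrahamCK17} that you are effectively reinventing uses $\Theta(n^3)$ space.
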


\paragraph{Our Techniques.} We focus on the decremental problem that we then generalize to the fully-dynamic setting using Johnson's algorithm. The most crucial ingredient of our new decremental algorithms is a new way to use congestion: for each shortest path $\pi_{s,t}$ from $s$ to $t$, each vertex on the shortest path is assigned a congestion value that relates to the costs induced by a deletion of such a vertex. If a vertex participates in many shortest paths, its deletion is expensive since we need to restore all shortest paths in which it participated. Thus, if the congestion of a vertex $v$ accumulated during some shortest path computations is too large, we simply remove the vertex from the graph and continue our shortest path computations on the graph $G \setminus \{v\}$. We defer handling the vertices of high congestion to a later stage and prepare for their deletion more carefully. This differs significantly from previous approaches that compute all paths in a specific order to avoid high congestion. Our new approach is simpler, more flexible and can be used to avoid vertices even at lower thresholds. 

The second technique we introduce is to use separators to recompute shortest paths after a vertex deletion. This allows us to speed up the computation since we can check fewer potential shortest paths. Since long paths have better separators, we can reduce the congestion induced by these paths and therefore reduce the overall amount of congestion on all vertices.

Once we complete our shortest path computations, we obtain the set of highly congested vertices and handle them using a different approach presented by Abraham, Chechik and Krinninger \cite{AbrahamCK17} that maintains deterministically the shortest paths through these vertices. These are exactly the shortest paths that we might have missed in the former step when we removed congested vertices. Thus, taking the paths of smaller weight, we obtain the real shortest paths in $G$.

Finally, we present a randomized technique to layer our approach where we use a low congestion threshold initially to identify uncritical vertices and then obtain with each level in the hierarchy a smaller set of increasingly critical vertices that require more shortest path computations on deletion. Since the sets of critical vertices are decreasing in size, we can afford to invest more update time in the maintenance of shortest paths through these vertices.

\section{Preliminaries}\label{sec:Prelim}

We denote by $G=(V,E, w)$ the input digraph where $w$ is the weight function mapping each edge to a number in the reals and define $n=|V|$ and $m = |E|$. In this article, we define $H \subseteq G$ to refer to $H$ being a vertex-induced subgraph of $G$, i.e. $H = G[V \setminus D]$ for some $D \subseteq V$. We also slightly abuse notation and write $G \setminus D$ for $D \subseteq V$ to denote $G[V \setminus D]$. We let the graph $H$ with edge directions reversed be denote by $\overleftarrow{H}$.

The \emph{weight} of a path $\pi$ in an edge-weighted graph $G$ is the sum of weights of its edges. We let $w$ denote the weight function that maps each path in $\pi$ to its weight. We use $\nil$ to denote the empty path of weight $\infty$. Given two paths $\pi_1 = \langle u_1, u_2, \ldots, u_p \rangle$ and $\pi_2 = \langle v_1, v_2 \ldots, v_q \rangle$ in $G$ where $u_p = v_1$, denote by $\pi_1 \circ \pi_2$ the concatenated path $\langle u_1, u_2 \ldots, u_p =v_1, v_2, \ldots, v_q\rangle$. For any path $\pi$, we define $\pi\circ\nil = \nil\circ \pi = \nil$.

Let $\pi_{s,t}$ be a path starting in vertex $s$ and ending in vertex $t$. Then $\pi_{s,t}$ is a \emph{shortest path} in $G$ if its sum of edge weights is minimized over all paths from $s$ to $t$ in $G$. We denote the weight of a shortest path from $s$ to $t$ by $\mathbf{dist}_G(s,t)$. 

We say $\pi_{s,t}$ is the shortest path from $s$ to $t$ \emph{through} $C \subset V$, if $\pi_{s,t}$ is the path of minimum weight from $s$ to $t$ that contains a vertex in $C$. We further say a path $\pi_{s,t}$ has hop $h$ or is a \emph{$h$-hop-restricted path} in $G$ if it consists of at most $h$ edges. We denote by $\mathbf{dist}_G^h(s,t)$ the weight of the $h$-hop-restricted shortest path from $s$ to $t$. Finally, we define the notion of an \emph{improving} shortest path $\pi_{s,t}$ in $G$ with regard to $H \subseteq G$ to be a path of weight at most $\mathbf{dist}_{H}(s,t)$. We often combine these notions, saying, for example, that $\pi_{s,t}$ is an $h$-hop-improving shortest path through $C$ in $G$ with respect to $G \setminus D$ to refer to a path $\pi_{s,t}$ that is in $G$ and has weight at most equal to the shortest path between $s$ and $t$ of hop $h$ that contains a vertex in $C$ in $G \setminus D$.

In this paper, we often use a black box a result by Zwick \cite{Zwick02} that extends $h$-hop-improving shortest paths in $G \setminus D$ to improving shortest-paths. Since the lemma is implicit in \cite{Zwick02} we provide an implementation of the algorithm and a proof of correctness that can be found in appendix \ref{sec:proofLemmaExtDist}.

\begin{lemma}[see \cite{Zwick02,AbrahamCK17}] \label{lemma:ExtendDists}
Given a collection $\Pi$ of the $h$-hop-improving shortest paths for all pairs $(s,t) \in V^2$ in $G \setminus D$, then there exists a procedure \ExtendDists that returns improving shortest paths for all pairs $(s,t) \in V^2$ in time $O(n^3 \log n/ h + n^2 \log^2 n)$.
\end{lemma}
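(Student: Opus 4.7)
The plan is to apply the bit-doubling scheme of Zwick, iteratively increasing the hop bound starting from $h$. Concretely, set $\Pi_0 := \Pi$ and iteratively compute $\Pi_1, \Pi_2, \ldots, \Pi_\ell$ with $\ell = O(\log n)$, where $\Pi_i$ is a collection of $h_i$-hop-improving shortest paths in $G \setminus D$ for a geometrically growing $h_i$ (e.g., $h_i = (3/2)\, h_{i-1}$, which as we explain below ensures that each valid midpoint region is large). Once $h_\ell \geq n$, every shortest path in $G \setminus D$ has at most $h_\ell$ hops, so $\Pi_\ell$ is already a collection of improving shortest paths as required.

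At iteration $i$, given $\Pi_{i-1}$, I would build a deterministic bridging set $B_{i-1} \subseteq V$ of size $O((n/h_{i-1})\log n)$ with the property that for every pair $(s,t)$ whose true $h_i$-hop shortest path $\pi^\ast$ in $G \setminus D$ has more than $h_{i-1}$ hops, $B_{i-1}$ contains a valid midpoint $c \in \pi^\ast$, i.e.\ a vertex splitting $\pi^\ast$ into two subpaths of at most $h_{i-1}$ hops each. The update step then sets, for every pair $(s,t) \in V^2$,
\[
    w(\pi_i(s,t)) \;=\; \min\Bigl\{ w(\pi_{i-1}(s,t)),\; \min_{c \in B_{i-1}} w\bigl(\pi_{i-1}(s,c)\circ \pi_{i-1}(c,t)\bigr) \Bigr\},
\]
and records the corresponding path. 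This costs $O(n^2 |B_{i-1}|) = O(n^3 \log n / h_{i-1})$; summed over the $O(\log n)$ iterations the geometric series is bounded by $O(n^3 \log n / h)$, matching the first term of the claim. Correctness is by induction on $i$: the subpaths $\pi^\ast[s..c]$ and $\pi^\ast[c..t]$ are themselves $\leq h_{i-1}$-hop paths in $G \setminus D$ and thus upper bound $\mathbf{dist}^{h_{i-1}}_{G\setminus D}(s,c)$ and $\mathbf{dist}^{h_{i-1}}_{G\setminus D}(c,t)$, so the inductive hypothesis gives $w(\pi_{i-1}(s,c)\circ \pi_{i-1}(c,t)) \leq w(\pi^\ast) = \mathbf{dist}^{h_i}_{G \setminus D}(s,t)$.

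The main obstacle is constructing $B_{i-1}$ deterministically within the $O(n^2 \log^2 n)$ additive budget, since a naive greedy set-cover that scans the valid-midpoint region of each of the $n^2$ stored paths would cost $\Theta(n^2 h_{i-1})$ per iteration and blow the budget for large $h_{i-1}$. I would instead use a more careful greedy scheme that avoids re-scanning full midpoint regions: maintain, for each vertex $v$, a counter of the number of yet-uncovered pairs whose stored path currently places $v$ in its valid-midpoint region; repeatedly pick the argmax vertex via a priority queue and only touch the pairs it newly covers. Because the schedule $h_i = (3/2) h_{i-1}$ guarantees $\Omega(h_{i-1})$ candidate midpoints per path, the standard set-cover analysis yields $|B_{i-1}| = O((n/h_{i-1})\log n)$, and an amortized per-iteration cost of $O(n^2 \log n)$ telescopes to the $O(n^2 \log^2 n)$ overhead, completing the claimed running time.
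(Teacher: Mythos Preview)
Your overall architecture --- geometric hop-doubling from $h$ up to $n$, a bridging set $B_{i-1}$ of size $O((n/h_{i-1})\log n)$ at each level, and the min-plus update over $B_{i-1}$ --- is exactly the Zwick scheme the paper uses, and your inductive correctness argument and the $O(n^3\log n/h)$ bound for the update step are fine.

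The gap is in how you build $B_{i-1}$ within the $O(n^2\log^2 n)$ budget. Your ``careful greedy'' needs, for every vertex $v$, a counter of how many uncovered pairs have $v$ in their midpoint window, together with an inverted index from $v$ to those pairs so that picking $v$ lets you touch only the pairs it newly covers. But \emph{initializing} those counters and that index requires scanning the midpoint window of every stored path once, which is $\Theta(n^2 h_{i-1})$ work at level $i$; summed over the geometric levels this is $\Theta(n^3)$, not $O(n^2\log^2 n)$. The ``only touch newly covered pairs'' trick amortizes the selection phase but does nothing for the setup phase, and the standard linear-time greedy set cover runs in time proportional to the total input size, which here is again $\Theta(n^2 h_{i-1})$. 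So as written the bridging-set construction dominates and you lose the second term of the bound.

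The paper sidesteps this by a recursive layering: rather than hitting the midpoint windows of all $n^2$ paths at level $i$, it only hits the (shortest) paths \emph{from and to the vertices of the previous separator} $\textsc{Separator}^{i-1}$. Since $|\textsc{Separator}^{i-1}| = O((n/h_{i-1})\log n)$, there are only $O(n\cdot (n/h_{i-1})\log n)$ such paths, each of hop $O(h_i)$, so the greedy hitting-set lemma runs in $O(n^2\log n)$ per level and $O(n^2\log^2 n)$ overall. The point is that any long path must already pass through a vertex of $\textsc{Separator}^{i-1}$, so restricting attention to paths anchored at those vertices is enough to produce a valid hitting set at the next scale. If you want to salvage your approach, this recursive restriction is the missing ingredient; a priority-queue greedy over all $n^2$ pairs cannot meet the stated bound.
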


\section{The Framework}

In this section, we describe the fundamental approach that we use for our data structures. We then refine this approach in the next section to obtain our new data structures. We start by stating a classic reduction that was used in all existing approaches.

\begin{lemma}[see \cite{henzinger2001maintaining, thorup2005worst, AbrahamCK17}] \label{lma:reductionToDecr}
Given a data structure on $G$ that supports a batch deletion of up to $2\Delta$ vertices $D \subseteq V$ from $G$ such that the data structure computes for each $(s,t) \in V^2$ a shortest path $\pi_{s,t}$ in $G \setminus D$, and can return the $k$ first edges on $\pi_{s,t}$ in time $O(k)$ time. Then, if the preprocessing time is $t_{pre}$ and the batch deletion worst-case time is $t_{del}$, there exists a fully dynamic APSP algorithm with $O(t_{pre}/\Delta + t_{del} + \Delta n^2)$ worst-case update time.
\end{lemma}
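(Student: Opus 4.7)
The plan is to apply the standard periodic-rebuild reduction, but with a staggered two-copy schedule so that preprocessing can be spread evenly across updates and still yield a worst-case bound. First we keep two instances of the batch-deletion data structure in play at any time: an \emph{active} instance answering queries on a snapshot $G_{\mathrm{snap}}$, and a \emph{background} instance being built on a more recent snapshot. Every $\Delta$ updates the roles swap. To make preprocessing worst-case, on each update we execute $\lceil t_{pre}/\Delta\rceil$ units of the background preprocessing, which pays the $t_{pre}/\Delta$ term in the claimed bound.

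The second step is to cope with vertex insertions, which the batch-deletion primitive does not handle on its own. Throughout a phase we maintain the set $D$ of vertices of $G_{\mathrm{snap}}$ that are no longer present in the current graph and the set $I$ of vertices present now but missing from $G_{\mathrm{snap}}$. Under the staggered schedule the active snapshot is at most $\Delta$ updates old when the phase begins, so after another $\Delta$ updates within the phase we always have $|D|, |I| \le 2\Delta$, which matches the hypothesis of the batch-deletion primitive. On each update we invoke batch-delete with the current $D$ to obtain shortest paths in $G_{\mathrm{snap}} \setminus D$ in time $t_{del}$, and then fold in the vertices of $I$ one at a time by a Johnson-style relaxation: for a newly added $v$ we set $\mathbf{dist}(v,t) \gets \min_{(v,u) \in E}(w(v,u) + \mathbf{dist}(u,t))$ for every $t$ and symmetrically for $\mathbf{dist}(s,v)$, in $O(n^2)$ total, then relax $\mathbf{dist}(s,t) \gets \min(\mathbf{dist}(s,t),\, \mathbf{dist}(s,v) + \mathbf{dist}(v,t))$ across all pairs. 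Processing all $|I| \le 2\Delta$ inserted vertices therefore costs $O(\Delta n^2)$. Summing the three contributions yields the claimed worst-case bound $O(t_{pre}/\Delta + t_{del} + \Delta n^2)$.

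The main obstacle will be bookkeeping rather than any conceptually new step: the two-copy schedule must be aligned so that when the background instance becomes active its snapshot is exactly $\Delta$ updates old, and so that the spread preprocessing completes just in time for the handover. A secondary check is that the one-at-a-time Johnson-style insertion correctly handles multiple co-inserted vertices; this is fine because after each insertion the distance matrix reflects the current working graph, and any simple shortest path through a newly inserted $v$ traverses $v$ exactly once, so a single relaxation suffices.
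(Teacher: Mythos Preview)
Your proposal is correct and follows exactly the approach the paper sketches after stating the lemma: build a batch-deletion data structure on a snapshot, handle the at most $2\Delta$ inserted vertices via Johnson's algorithm (the paper's Lemma~\ref{lma:floydWarshall}), and deamortize the periodic rebuild by spreading the preprocessing over $\Delta$ updates using a background copy. The paper gives only this informal outline and defers the details to the cited references, so your write-up is effectively a fleshed-out version of the same argument.
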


This lemma reduces the problem to finding a data structure with good preprocessing time that can handle batch deletions. To get some intuition on how the reduction stated above works, note that vertex insertions can be solved very efficiently. 

\begin{lemma}[implied by Johnson's algorithm, see for example \cite{cormen2009introduction}]
\label{lma:floydWarshall}
Given a graph $G \setminus C$, where $C \subset V$ of size $\Delta$, and given all-pairs shortest paths $G \setminus C$, we can compute all-pairs shortest paths in $G$ in time $O(\Delta n^2)$.
\end{lemma}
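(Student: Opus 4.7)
The plan is to insert the vertices of $C$ back into $G \setminus C$ one at a time, updating the distance matrix after each insertion in $O(n^2)$ time; since $|C| = \Delta$, this yields the claimed $O(\Delta n^2)$ total. I would enumerate $C = \{v_1, \ldots, v_\Delta\}$ and let $G_i = G \setminus \{v_{i+1}, \ldots, v_\Delta\}$, so $G_0 = G \setminus C$ and $G_\Delta = G$. Given the distance matrix $M_0$ of $G_0$, the task reduces to producing $M_i$ from $M_{i-1}$ in $O(n^2)$ time.

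For the insertion of $v_i$ into $G_{i-1}$, first I would compute $\mathbf{dist}_{G_i}(s, v_i)$ and $\mathbf{dist}_{G_i}(v_i, t)$ for every $s, t \in V(G_i)$ by one round of relaxation through the in- and out-neighbors of $v_i$: concretely, for every $s$, scan the incoming edges $(u, v_i)$ and take the minimum of $M_{i-1}[s, u] + w(u, v_i)$, and symmetrically for the $v_i \to t$ direction. Each of these two passes runs in $O(n^2)$, since it costs $O(n)$ per directed edge incident to $v_i$. Second, I would perform the Floyd--Warshall-style relaxation $M_i[s, t] \leftarrow \min\bigl(M_{i-1}[s, t],\; \mathbf{dist}_{G_i}(s, v_i) + \mathbf{dist}_{G_i}(v_i, t)\bigr)$ over all pairs, which also takes $O(n^2)$ time. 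Predecessor or midpoint pointers can be maintained alongside each relaxation so that actual shortest paths, not only their lengths, are retrievable.

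Correctness follows by induction on $i$. Any shortest $s \to t$ path in $G_i$ either avoids $v_i$, in which case it is also a shortest path in $G_{i-1}$ and is recorded by $M_{i-1}$, or it visits $v_i$ and decomposes into a shortest $s \to v_i$ prefix concatenated with a shortest $v_i \to t$ suffix in $G_i$. The only subtlety --- and the main point one needs to check --- is that the one-hop relaxation computes $\mathbf{dist}_{G_i}(s, v_i)$ correctly even though $M_{i-1}$ knows nothing about $v_i$: choose the prefix to be simple, so it visits $v_i$ only at its endpoint, meaning its penultimate vertex $u$ and the $s \to u$ subpath lie entirely in $V(G_{i-1})$. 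Hence $\mathbf{dist}_{G_{i-1}}(s, u) = \mathbf{dist}_{G_i}(s, u)$, and minimizing over in-neighbors of $v_i$ yields the correct value; the symmetric argument covers $\mathbf{dist}_{G_i}(v_i, t)$. Substituting both into the pairwise relaxation produces $M_i$ and closes the induction.
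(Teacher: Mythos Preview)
Your proposal is correct and matches what the paper has in mind: the paper does not give a proof but attributes the lemma to Johnson's algorithm, which (as noted in the introduction) ``can be easily extended to handle vertex insertions in $O(n^2)$ worst-case update time per insertion given the distance matrix of the current graph.'' Your argument spells out exactly this one-vertex-at-a-time insertion with a Floyd--Warshall-style relaxation, and the correctness check you give for the single-hop computation of $\mathbf{dist}_{G_i}(s,v_i)$ via simple paths is the right point to verify.
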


Therefore, the algorithm can reduce the problem to a decremental problem and using the shortest paths in $G \setminus (D \cup C)$ insert a batch of vertices $C \setminus D$ after each update. When $C$ becomes of size larger than $\Delta$ (i.e. after at least $\Delta$ updates), we recompute the decremental data structure. Using standard deamortization techniques for rebuilding the data structure, the preprocessing time can be split into small chunks that are processed at each update and therefore we obtain a worst-case guarantee for each update.

\subsection{A Batch Deletion Data Structure with Efficient Preprocessing Time}

In the following, we present a procedure \Prepro , given in Algorithm \ref{alg:detpreproc}, that is invoked with parameters $G$ and integers $h > 0$ and $\tau\geq 2n^2$ to compute paths $\pi^i_{s,t}$ for each tuple $(s,t) \in V^2$ and for every $i \in [0, i_h]$ with $i_h = \lceil\log_{3/2} h\rceil$. Our goal is to use these paths in the batch deletion to recompute all-pairs shortest paths. 

\begin{algorithm}
\caption{$\Prepro$}
\label{alg:detpreproc}
\KwIn{A graph $G=(V,E)$, a positive integer $h > 0$ determining the maximum hop and an integer $\tau \geq 2n^2$ regulating the congestion.}
\KwOut{A tuple $(C, \{\textsc{Congestion}(v)\}_{v \in V},\{\pi^i_{s,t}\}_{s,t\in V,i\in\{0,\ldots,i_h\}})$ with the properties of Lemma~\ref{lemma:PebbleSum}.}
\BlankLine
$C \gets \emptyset$\;
\lForEach{$v \in V$}{$\textsc{Congestion}(v) \gets 0$}
\lForEach{$s,t\in V, i \in [0, i_h]$}{$\pi^i_{s,t} \gets \nil$}

$X \gets V$  \;
\While(\label{lne:detWhileLoop}){$X \neq \emptyset$}{
    Remove an arbitrary root $s$ from $X$\label{lne:detPickArbitrary}\; 
    
    \ForEach(\label{lne:forloopdet}){$i \in [0, i_h]$}{
    $\{\pi^i_{s,t}\}_{t\in V} \gets \textsc{BellmanFord}(s, G[V \setminus C], h_i)$ \label{lne:computeShortestPathsPrepro}\;
    
    \ForEach(\label{lne:detForeachBegin}){$t \in V, u \in \pi^i_{s,t}$}{
            $\textsc{Congestion}(u) \gets \textsc{Congestion}(u) + \lceil n / h_i \rceil$\;
        }
        $C \gets \{v \in V| \textsc{Congestion}(v) > \tau/2\}$\;\label{lne:detForeachEnd}
    }
}
\end{algorithm}

This procedure maintains \textit{congestion} values $\textsc{Congestion}(u)$ for each $u\in V$. These counters are initially $0$. Let $i_h = \lceil\log_{3/2} h\rceil$ and let $h_i = (3/2)^i$, for $i = 0,\ldots,i_h$ throughout the rest of the article. For each such $i$, $h_i$-hop-restricted shortest paths $\pi^i_{s,t}$ in $G$ are computed from roots $s$ to all $t \in V$ where roots are considered in an arbitrary order.

For each $u \in V$, whenever an $h_i$-hop-restricted path $\pi^i_{s,t}$ is found that passes through $u$, $\textsc{Congestion}(u)$ is increased by $\lceil n/h_i\rceil$. Hence, paths of long hop congest vertices on them less than small hop paths; this is key to getting our update time improvement as it helps us to keep the amount of congestion at $O(n)$ for a path of \emph{any} hop (as opposed to existing techniques which can only bound the cost at $O(nh)$). Once a {congestion} value $\textsc{Congestion}(u)$ increases beyond threshold value $\tau/2$, $u$ is removed from the vertex set. More precisely, a growing set $C$ keeps track of the set of vertices whose {congestion} value is above $\tau/2$ and all hop-restricted paths are computed in the shrinking graph $G[V\setminus C]$.

\begin{lemma}\label{lemma:DetPreproc}
The procedure $\Prepro$ can be implemented to run in $O(n^3h)$ time.
\end{lemma}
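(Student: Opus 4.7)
The plan is to bound separately the cost of the hop-restricted shortest path computations, the cost of updating the congestion counters, and the cost of maintaining the set $C$, and then sum everything up. The key enabler is that the hop parameters $h_i = (3/2)^i$ for $i=0,\ldots, i_h = \lceil\log_{3/2} h\rceil$ form a geometric sequence with maximum term $\Theta(h)$, so $\sum_{i=0}^{i_h} h_i = O(h)$.

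First, I would implement $\textsc{BellmanFord}(s, G[V \setminus C], h_i)$ as the standard hop-restricted variant that performs $h_i$ rounds of edge relaxations over the current edge set, yielding a running time of $O(h_i(n + m)) = O(h_i \cdot n^2)$ per invocation together with the paths $\pi^i_{s,t}$ recovered via predecessor pointers. Summing over all values of $i$ for a fixed root $s$ gives $O\bigl(n^2 \sum_{i=0}^{i_h} h_i\bigr) = O(n^2 h)$, and summing over the at most $n$ roots popped from $X$ yields the claimed $O(n^3 h)$ bound. The subgraph $G[V\setminus C]$ can be represented implicitly by marking vertices in $C$ as deleted, so that BellmanFord skips them in $O(1)$ per encounter; since $C$ only grows across iterations, no data structure rebuild is needed.

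Second, I would show that the congestion bookkeeping is a lower-order term. For a given root $s$ and index $i$, each of the $n$ paths $\pi^i_{s,t}$ has at most $h_i+1$ vertices, so the \textbf{ForEach} loop on lines~\ref{lne:detForeachBegin}--\ref{lne:detForeachEnd} performs $O(n \cdot h_i)$ increments. Summing over $i$ and over all roots this contributes $O(n^2 h)$, well within the budget. To avoid paying $\Theta(n)$ for recomputing $C$ from scratch, I would maintain $C$ incrementally: whenever $\textsc{Congestion}(u)$ is incremented and crosses the threshold $\tau/2$, insert $u$ into $C$ in $O(1)$ time. This makes line~\ref{lne:detForeachEnd} essentially free.

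The only subtlety — really just a sanity check rather than an obstacle — is verifying that no operation is executed more times than the above accounting assumes. The outer \textbf{while} loop removes one root from $X$ per iteration and $|X| \le n$, so there are at most $n$ outer iterations; the inner \textbf{ForEach} runs $i_h + 1 = O(\log h)$ times per outer iteration; and each BellmanFord call dominates the work within one inner iteration. Adding the $O(n^3 h)$ BellmanFord cost and the $O(n^2 h)$ bookkeeping cost gives the claimed $O(n^3 h)$ running time.
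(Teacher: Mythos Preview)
Your proof is correct and follows essentially the same approach as the paper: bound each Bellman-Ford call by $O(n^2 h_i)$, observe that the congestion bookkeeping in lines~\ref{lne:detForeachBegin}--\ref{lne:detForeachEnd} is dominated by this, and then sum over all $i$ and all roots via the geometric series $\sum_{i=0}^{i_h} h_i = O(h)$ to get $O(n^3 h)$. Your treatment is in fact more explicit than the paper's (you spell out the incremental maintenance of $C$ and the implicit representation of $G[V\setminus C]$), but the underlying argument is the same.
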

\begin{proof}
In each iteration of the for-loop in line \ref{lne:forloopdet}, computing $h_i$-hop-restricted shortest paths from source $s$ to all $t\in V\setminus C$ can be done with $h_i$ iterations of Bellman-Ford from $s$ in $G[V\setminus C]$ in time $O(n^2h_i)$ time. It is straight-forward to see that this dominates the cost incurred by the accounting in lines \ref{lne:detForeachBegin} to \ref{lne:detForeachEnd}. From this and from a geometric sums argument, it follows that the total running time over all sources $s$ is $O(n\cdot \sum_{i=0}^{\lceil \log_{3/2} h\rceil}  n^2 (3/2)^{i}) = O(n^3 h)$. The lemma now follows.
\end{proof}

To bound the time for updates in the next subsection, we need the following lemma. 

\begin{lemma}\label{lemma:PebbleSum}
At termination of $\Prepro$, the algorithm ensures that
\begin{enumerate}
    \item $\forall v \in V$: $\textsc{Congestion}(v) \leq \tau$,
    \item $\sum_{v\in V}\textsc{Congestion}(v) = O(n^3\log h)$, and
    \item $|C| = O(n^3 \log h/ \tau)$.
    \item Each computed path $\pi^i_{s,t}$ is a \emph{$h_i$-hop-improving shortest path} in $G$ with regard to $G \setminus C$.
\end{enumerate}
\end{lemma}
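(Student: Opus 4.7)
The plan is to verify the four items in sequence, each being a short accounting argument except for item (4), which requires a monotonicity observation about the set $C$.

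For item (1), I would track how much $\textsc{Congestion}(v)$ can grow between two consecutive executions of the assignment on line \ref{lne:detForeachEnd}. Just after such an assignment, any vertex $v\notin C$ satisfies $\textsc{Congestion}(v)\le\tau/2$, and $v$ is only updated again while it remains outside $C$. In the next iteration of the inner for-loop, $v$ can appear on at most $n$ of the computed paths $\pi^i_{s,t}$ (one per target $t$ in the shortest-path tree from $s$), so $\textsc{Congestion}(v)$ increases by at most $n\cdot\lceil n/h_i\rceil\le n^2+n\le\tau/2$, using $\tau\ge 2n^2$ (the small-$n$ case is trivial). Hence the value after the update is at most $\tau$.

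For item (2), I would charge the contribution to $\sum_v\textsc{Congestion}(v)$ on a per-path basis. Each computed path $\pi^i_{s,t}$ has at most $h_i+1$ vertices, each of which contributes $\lceil n/h_i\rceil$ to the sum, for a total of $(h_i+1)\lceil n/h_i\rceil=O(n)$ per path. There are $n$ sources, $n$ targets, and $i_h+1=O(\log h)$ values of $i$, giving $O(n^3\log h)$ in total. Item (3) then follows by a Markov-style argument: every vertex in $C$ contributes at least $\tau/2$ to the sum in (2), so $|C|\cdot(\tau/2)=O(n^3\log h)$, which gives $|C|=O(n^3\log h/\tau)$.

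The main obstacle is item (4), which requires comparing paths computed against the \emph{current} $C$ with the \emph{final} $C$. I would introduce notation $C^{s,i}$ for the value of $C$ at the moment when $\pi^i_{s,t}$ is computed in line \ref{lne:computeShortestPathsPrepro}, so that by construction
\[
w(\pi^i_{s,t}) \;=\; \mathbf{dist}^{h_i}_{G[V\setminus C^{s,i}]}(s,t).
\]
The key observation is that $C$ is only ever enlarged: line \ref{lne:detForeachEnd} adds vertices once $\textsc{Congestion}$ crosses $\tau/2$, and this quantity is monotonically nondecreasing, so $C^{s,i}\subseteq C$ at termination. Hence $G\setminus C$ is a vertex-induced subgraph of $G[V\setminus C^{s,i}]$, so every $h_i$-hop path from $s$ to $t$ in $G\setminus C$ is also such a path in $G[V\setminus C^{s,i}]$, giving
\[
w(\pi^i_{s,t}) \;\le\; \mathbf{dist}^{h_i}_{G\setminus C}(s,t).
\]
Since $\pi^i_{s,t}$ lies in $G[V\setminus C^{s,i}]\subseteq G$, it is an $h_i$-hop-improving shortest path in $G$ with respect to $G\setminus C$, as required. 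The rest is just checking that the empty path $\nil$ (used when no computation produced a path, or when BellmanFord returned no $s$-$t$ path within $h_i$ hops) still satisfies the definition vacuously, since then $\mathbf{dist}^{h_i}_{G\setminus C}(s,t)=\infty=w(\nil)$.
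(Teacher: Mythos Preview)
Your proof is correct and follows essentially the same approach as the paper's: the same loop invariant for item~(1), the same per-path $O(n)$ charge for item~(2), the same Markov-style count for item~(3), and your monotonicity argument for $C$ in item~(4) spells out what the paper compresses into ``follows from the analysis of Bellman-Ford.'' One minor arithmetic point in item~(1): you bound the per-iteration increase by $n\lceil n/h_i\rceil\le n^2+n$, but $\tau\ge 2n^2$ only gives $\tau/2\ge n^2$, so $n^2+n\le\tau/2$ is not justified as stated; instead use directly that $h_i\ge 1$ implies $\lceil n/h_i\rceil\le n$, giving the sharp bound $n\cdot\lceil n/h_i\rceil\le n^2\le\tau/2$ (this is what the paper does).
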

\begin{proof}
We first observe that we maintain the loop invariant for the while-loop in line \ref{lne:detWhileLoop} that the congestion of any vertex not in $C$ is at most $\tau/2$. This is true since initially the congestion of all vertices is $0$ and at the end of each iteration, we explicitly remove vertices with congestion larger than $\tau/2$ from $C$. To prove property 1, it therefore suffices to show that during an iteration of the while-loop in line \ref{lne:detWhileLoop}, the congestion of any vertex is increased by at most $\tau/2$. To see this, observe that there are at most $n$ paths under consideration in each iteration of the while loop. Every vertex $u$ has its congestion increased by $\lceil n / h_i \rceil \leq n$ for each such path it belongs to. Therefore, we add at most $n^2 \leq \tau /2$ congestion to any vertex $u$ during an iteration of the while-loop.

To see property 2, define $\Phi = \sum_{v\in V}\textsc{Congestion}(v)$. Initially, $\Phi = 0$. Observe that during an iteration of the while-loop in line \ref{lne:detWhileLoop}, we have at most $n$ paths of hop up to $h_i$. Thus at most $h_i + 1$ vertices increase their congestion due to a path by $\lceil n / h_i \rceil$ and so each such path increases $\Phi$ by at most $O(n)$. Thus each while-loop iteration adds at most $O(n^2)$ to $\Phi$ and since we execute the while-loop exactly $n \lceil \log_{3/2} h \rceil$ times, the final value of $\Phi$ is $O(n^3\log h)$. 

Property 3 follows since each vertex $c \in C$ has congestion at least $\tau/2$, implying that there can be at most $2\Phi / \tau = O(n^3 \log h/\tau)$ vertices in $|C|$. Property 4 follows from the analysis of Bellman-Ford.
\end{proof}

The space-efficiency is straight-forward to analyze since each pair $(s,t) \in V^2$ requires one path $\pi^i_{s,t}$ to be stored for each $i \in [0, i_h]$, storing the shortest paths explicitly requires space $O(\sum_{i=0}^{\lceil \log_{3/2} h\rceil} n^2 (3/2)^{i}) = O(n^2 h)$. We defer the description of a more space-efficient data structure with the same guarantees until Section \ref{subsec:spaceEfficientDet}.

\subsection{Handling Deletions}
\label{subsec:handlingDeletions}

In this section, we use the data structure computed by $\Prepro$ with $C$ being again the set of congested vertices, and show how to use this data structure to handle a batch $D \subseteq V$ of at most $2\Delta$ deletions, i.e. we show how to efficiently compute all-pairs shortest paths in $G \setminus D$. Our update procedure proceeds in multiple phases $1, \dots, i_h$. Throughout the procedure, we enforce the following invariant.

\begin{invariant}
\label{inv:phaseInvariant}
For every $(s,t) \in V^2$ where $\Pi_{s,t}$ is the collection of shortest-path from $s$ to $t$ in $G \setminus D$, and 
\begin{itemize}
    \item no $\pi_{s,t} \in \Pi_{s,t}$ contains a vertex in $C$, and 
    \item there is some shortest-path $\pi_{s,t} \in \Pi_{s,t}$ of hop at most $h_i$.
\end{itemize}
Then, after the execution of the $i^{th}$ phase, we have that $\pi^i_{s,t}$ is a shortest-path in $\Pi_{s,t}$ of minimal hop.
\end{invariant}

Before we describe how to enforce the invariant, observe that the invariant implies that after we finished phase $i_h$, we have for each pair $(s,t)$ a $h$-hop-improving shortest path in $G \setminus D$ which can then be extended using procedure $\textsc{DetExtDistances}(\{N_i\}_i, h)$ as described in Lemma \ref{lemma:ExtendDists} to give all-pairs shortest paths in $G \setminus D$, as required.

\begin{algorithm}
\caption{$\textsc{Delete}(D, h)$}
\label{alg:delete}
\DontPrintSemicolon
\BlankLine
\ForEach(\label{lne:deleteInit}){$\pi^0_{s,t} \cap D \neq \emptyset$}{
    $\pi^0_{s,t} \gets \nil$
}

\For(\label{lne:deleteMain}){$i \gets 1$ to $ i_h$}{
    \ForEach{$s \in V$}{
        \If{$h_i > 3$}{
            Compute an integer $\textsc{Rad}^{i}(s) \in (\frac 1 3 h_i,\frac 2 3 h_i)$ that minimizes the size of  $\textsc{Separator}^i(s) = \{x \in V \setminus \{D \cup C\} \;\vert\; |\pi^{i-1}_{s,x}| = \textsc{Rad}^{i}(s)\}$ \label{lne:deleteSeparator}
        }\Else{$\textsc{Separator}^i(s) \gets V$}
    }

    \ForEach(\label{lne:foreachDeleteMain}){$\pi^i_{s,t} \cap D \neq \emptyset$}{
        Let $x$ be any vertex in $\textsc{Separator}^i(s) \cup \{t\}$ such that for any vertex $y \in \textsc{Separator}^i(s) \cup \{t\}$, either $w(\pi^{i-1}_{s,x} \circ \pi^{i-1}_{x,t}) < w(\pi^{i-1}_{s,y} \circ \pi^{i-1}_{y,t})$ or $w(\pi^{i-1}_{s,x} \circ \pi^{i-1}_{x,t}) = w(\pi^{i-1}_{s,y} \circ \pi^{i-1}_{y,t})$ and $|\pi^{i-1}_{s,x} \circ \pi^{i-1}_{x,t}| \leq |\pi^{i-1}_{s,y} \circ \pi^{i-1}_{y,t}|$.  \label{lne:deleteNewPath}\;
        $\pi^i_{s,t} \gets \pi^{i-1}_{s,x} \circ \pi^{i-1}_{x,t}$
    }
}
\Return $\textsc{DetExtDistances}(\{ \pi^{i_h}_{s,t} \}_{(s,t) \in V^2}, h_i)$
\end{algorithm}

Let us now describe how to implement the execution of a phase which is also depicted in Algorithm \ref{alg:delete}. Initially, we change all precomputed paths $\pi^0_{s,t}$ with $s$ or $t$ in $D$ to the empty path $\bot$. Clearly, this enforces Invariant \ref{inv:phaseInvariant} and can be implemented in $O(n^2)$ time.

In the $i^{th}$ phase (for $i > 0$), we start by computing for each vertex $s \in V$, a hitting set of all $h_{i-1}$-hop-improving shortest paths starting in $s$. We take the separator set $\textsc{Separator}^i(s)$ such that in particular each (real) shortest path from $s$ of length at least $\frac 2 3 h_i$ contains at least one vertex in $\textsc{Separator}^i(s)$ that is at distance $\textsc{Rad}^i(s)$ from $s$. Here $\textsc{Rad}^i(s)$ is chosen to be between the $(\frac 1 3 h_i)^{th}$ and $(\frac 2 3 h_i)^{th}$ vertex on each path (with exception for very small $h_i$ where we chose the separator to be the entire vertex set). Since there are $\Theta(h_i)$ layers to chose $\textsc{Rad}^i(s)$ from, and the layers partition the vertex set $V$, we obtain that $\textsc{Separator}^i(s)$ is of size $O(n / h_i)$ by the pigeonhole principle. Finally, to fix any precomputed $h_i$-hop-improving shortest path $\pi^i_{s,t}$ that is no longer in $G \setminus D$, we check the paths $\pi^{i-1}_{s,x} \circ \pi^{i-1}_{x,t}$ for each $x \in \textsc{Separator}^i(s) \cup \{t\}$ and take a path of minimal weight (and among those of minimal hop). We point out that this path is either the concatenation of two $h_{i-1}$-hop-improving shortest paths, or the path $\pi^{i-1}_{s,t}$. This completes the description of our update algorithm.

\begin{lemma}
The Invariant \ref{inv:phaseInvariant} is enforced throughout the entire execution of procedure $\textsc{Delete}(D, h)$.
\end{lemma}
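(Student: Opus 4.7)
The plan is to prove the invariant by induction on the phase index $i$, carrying along an auxiliary subclaim that after every phase $j \leq i$ each stored path $\pi^j_{s,t}$ is either $\nil$ or a (not necessarily simple) walk lying in $G \setminus (C \cup D)$. The subclaim follows routinely from how the algorithm acts: the precomputed paths come from Bellman-Ford on $G[V \setminus C]$, and any precomputed path meeting $D$ is either set to $\nil$ in line \ref{lne:deleteInit} or replaced by a concatenation whose two pieces are, by the subclaim at level $j-1$, themselves walks in $G \setminus (C \cup D)$. The base case $i = 0$ is then immediate: the hypothesis forces a $1$-hop shortest $s$-to-$t$ path in $G \setminus D$, hence $s, t \notin D$, and the precomputed $\pi^0_{s,t} = (s,t)$ survives line \ref{lne:deleteInit}.

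For the inductive step, fix a pair $(s,t)$ satisfying the level-$i$ hypothesis and let $\pi^\ast$ be a min-hop shortest path in $G \setminus D$ of hop $k^\ast \leq h_i$. If the precomputed $\pi^i_{s,t}$ avoids $D$ it remains unchanged, and then $w(\pi^i_{s,t}) \leq \mathbf{dist}^{h_i}_{G \setminus C}(s,t) \leq w(\pi^\ast) = \mathbf{dist}_{G \setminus D}(s,t)$ together with $\pi^i_{s,t} \subseteq G \setminus D$ identifies its weight; the min-hop property of Bellman-Ford over $G[V \setminus C]$, combined with the observation that any min-weight $h_i$-hop representative in $G \setminus C$ of strictly smaller hop that happened to avoid $D$ would contradict the minimality of $k^\ast$, pins down the hop.

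The main obstacle is the case where the precomputed $\pi^i_{s,t}$ meets $D$. If $k^\ast \leq h_{i-1}$ the IH on $(s,t)$ together with the option $x = t$ already reproduces the correct concatenation, so I focus on $h_{i-1} < k^\ast \leq h_i$. Writing $\pi^\ast$ as $v_0 = s, v_1, \ldots, v_{k^\ast} = t$, I first verify that every prefix pair $(s, v_j)$ with $j \leq h_{i-1}$ inherits the level-$(i-1)$ hypothesis: condition~$1$ by the standard extension argument (swapping the prefix of $\pi^\ast$ for any alternative shortest $s$-to-$v_j$ path gives another shortest $s$-to-$t$ path, which must avoid $C$), and condition~$2$ because $\pi^\ast_{s \to v_j}$ has hop $j \leq h_{i-1}$. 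By IH, $|\pi^{i-1}_{s, v_j}|$ equals the minimum hop among shortest $s$-to-$v_j$ paths in $G \setminus D$, and extending $\pi^{i-1}_{s, v_{j-1}}$ by the edge $(v_{j-1}, v_j)$ shows this quantity can grow by at most one per step. Moreover, were the function to never reach $\textsc{Rad}^i(s)$ by $j = h_{i-1}$, then concatenating $\pi^{i-1}_{s, v_{h_{i-1}}}$ with $\pi^\ast_{v_{h_{i-1}} \to t}$ would yield a shortest $s$-to-$t$ walk of hop strictly less than $\textsc{Rad}^i(s) + (k^\ast - h_{i-1}) < h_{i-1} + (k^\ast - h_{i-1}) = k^\ast$, contradicting minimality. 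Hence this integer-valued function, which can jump upward by at most one per step, attains $\textsc{Rad}^i(s)$ at some first $j^\ast \leq h_{i-1}$, giving $v_{j^\ast} \in \textsc{Separator}^i(s)$; since $j^\ast \geq \textsc{Rad}^i(s) > h_i / 3$, the suffix pair $(v_{j^\ast}, t)$ also inherits the level-$(i-1)$ hypothesis (its relevant hop is $k^\ast - j^\ast < h_{i-1}$), so IH identifies $\pi^{i-1}_{s, v_{j^\ast}} \circ \pi^{i-1}_{v_{j^\ast}, t}$ as a walk of weight $\mathbf{dist}_{G \setminus D}(s,t)$ and hop exactly $k^\ast$.

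To convert this existence into correctness of the algorithm's minimum-selection over $\textsc{Separator}^i(s) \cup \{t\}$, I would invoke the auxiliary subclaim: every candidate concatenation is either $\nil$ or a walk in $G \setminus (C \cup D)$, and the first bullet of the hypothesis gives $\mathbf{dist}_{G \setminus (C \cup D)}(s,t) = \mathbf{dist}_{G \setminus D}(s,t)$, so the min-weight value equals $\mathbf{dist}_{G \setminus D}(s,t)$ exactly. The hop tie-break then forces the chosen hop to equal $k^\ast$, since any min-weight walk in $G \setminus D$ simplifies to a simple shortest path of no larger hop, which itself has hop at least $k^\ast$. Absence of negative cycles promotes the chosen walk to a simple min-hop shortest path in $G \setminus D$, closing the induction.
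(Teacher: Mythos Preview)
Your argument is correct and follows the same inductive skeleton as the paper: induct on the phase index, split on whether the precomputed $\pi^i_{s,t}$ meets $D$, and in the interesting case locate a separator vertex on the min-hop shortest path $\pi^\ast$ so that the two level-$(i-1)$ pieces concatenate to a min-hop shortest path.

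The one notable difference is how you find that separator vertex. The paper simply takes $\hat x = v_{\textsc{Rad}^i(s)}$, the $\textsc{Rad}^i(s)$-th vertex of $\pi^\ast$, and observes that because $\pi^\ast$ is min-hop, every prefix $\pi^\ast[s,v_j]$ is itself a min-hop shortest path of hop exactly $j$ (otherwise a shorter prefix concatenated with the suffix would beat $k^\ast$), so $|\pi^{i-1}_{s,\hat x}| = \textsc{Rad}^i(s)$ directly. Your intermediate-value argument on the function $j \mapsto |\pi^{i-1}_{s,v_j}|$ is a valid but roundabout substitute: in fact that function is identically $j$ by the same prefix-replacement contradiction you use elsewhere, which would let you drop the IVT machinery and take $j^\ast = \textsc{Rad}^i(s)$ outright. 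Conversely, you are more explicit than the paper in two places the paper glosses over: you justify why an unchanged precomputed path has minimal hop (via the min-hop tie-breaking of Bellman-Ford together with the assumption that $\pi^\ast$ lies in $G\setminus C$), and you spell out via the auxiliary subclaim why the algorithm's minimum over $\textsc{Separator}^i(s)\cup\{t\}$ cannot undershoot $\mathbf{dist}_{G\setminus D}(s,t)$ and why the hop tie-break then forces hop $k^\ast$.
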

\begin{proof}
Before the loop starts the invariant is clearly enforced since for $i = 0$, we either have an edge between two points or not. Let us therefore take the inductive step for $i > 0$ and let us focus on some path $\pi^i_{s,t}$. Clearly, if $\pi^i_{s,t}$ contains no vertex in $D$, it is still $h_i$-hop-improving in $G \setminus D$ and therefore no action is required. Otherwise, let $\widehat{\pi_{s,t}}$ be some $s$-to-$t$ shortest-path in $\Pi_{s,t}$ of minimal hop (we assume that no shortest path intersects $C$). Clearly, if $\widehat{\pi_{s,t}}$ has $|\widehat{\pi_{s,t}}| \leq h_{i-1}$, then we have by the induction hypothesis that $\pi^{i-1}_{s,t}$ is a shortest-path from $s$-to-$t$ of minimal hop, and thus if $x = t$, we obtain $\pi^i_{s,t} = \pi^{i-1}_{s,t}$. If $x \neq t$, then the path $\pi^i_{s,t}$ is set to another shortest-path of minimal hop by the way we choose $x$.

It remains to consider the case where $h_i \leq |\widehat{\pi_{s,t}}| > h_{i-1}$. Then, let $\hat{x}$ be the $\textsc{Rad}^t(s)^{th}$ vertex on $\widehat{\pi_{s,t}}$ (which exists since $\textsc{Rad}^t(s) < \frac{2}{3}h_i = h_{i-1} < |\widehat{\pi_{s,t}}|$). Then, observe that $\hat{x} \in \textsc{Separator}^i(s) \cup \{t\}$ since by induction hypothesis every $s$-to-$\hat{x}$ shortest-path $\pi_{s,\hat{x}}$ of minimal hop has exactly $\textsc{Rad}^t(s)$ hops and $\pi^{i-1}_{s,\hat{x}}$ is chosen among these paths by induction hypothesis (also none of these paths intersects $C$ since no $s$-to-$t$ shortest-path does). Similarly, we have that $\pi^{i-1}_{\hat{x},t}$ is a shortest path of minimal hop from $\hat{x}$ to $t$. This implies that the path $\pi^{i-1}_{s,\hat{x}} \circ \pi^{i-1}_{\hat{x},t}$ is a shortest $s$-to-$t$ path of minimal hop. Since $x$ has $\hat{x}$ among its choices, we thus have that $\pi^i_{s,t} = \pi^{i-1}_{s,{x}} \circ \pi^{i-1}_{{x},t}$ is a shortest $s$-to-$t$ path of minimal hop. The lemma follows. 
\end{proof}

\begin{lemma}
Given a data structure that satisfies the properties listed in Lemma \ref{lemma:PebbleSum} with congestion threshold $\tau$ and a set of congested vertices $C$, there exists an algorithm that computes all-pairs shortest paths in $G \setminus D$ and returns the corresponding distance matrix in time $O(|D| \tau + |C| n^2 + n^3 \log n /h)$.
\end{lemma}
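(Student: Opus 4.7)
The algorithm I would propose consists of two independent components: (i)~execute $\textsc{Delete}(D,h)$ exactly as written in Algorithm~\ref{alg:delete}, and (ii)~for each congested vertex $c \in C$, compute single-source shortest paths from and to $c$ in $G \setminus D$, each computation costing $O(n^2)$ time on the dense graph (using standard techniques, e.g.\ as in~\cite{AbrahamCK17}). For each pair $(s,t)$, the returned distance is the minimum between the weight produced by (i) and $\min_{c \in C}\bigl(\mathbf{dist}_{G\setminus D}(s,c) + \mathbf{dist}_{G\setminus D}(c,t)\bigr)$ from~(ii). Correctness follows by a case split on whether some shortest path in $G \setminus D$ from $s$ to $t$ avoids $C$: in the avoiding case, Invariant~\ref{inv:phaseInvariant} combined with Lemma~\ref{lemma:ExtendDists} ensures that (i) produces the true distance; in the remaining case, (ii) exposes the exact distance via the appropriate $c \in C$.

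The main obstacle of the analysis is bounding the total path-fixing work inside the $i_h$ phases of Algorithm~\ref{alg:delete} by $O(|D|\tau)$. For each phase $i > 0$, a single broken path $\pi^i_{s,t}$ is repaired in $O(n/h_i)$ time: the algorithm chooses $\textsc{Separator}^i(s)$ as the smallest of $\Theta(h_i)$ disjoint distance-layers of $V \setminus (D\cup C)$, so by pigeonhole $|\textsc{Separator}^i(s) \cup \{t\}| = O(n/h_i)$. Every broken path in phase $i$ must contain at least one vertex of $D$, so I would charge its repair cost to the congestion counters. For a fixed $v \in D$ let $N^v_i$ denote the number of preprocessed paths $\pi^i_{s,t}$ containing $v$; each such occurrence contributed $\lceil n/h_i\rceil$ to $\textsc{Congestion}(v)$ during preprocessing, hence by property~1 of Lemma~\ref{lemma:PebbleSum},
\begin{equation*}
\sum_{i=0}^{i_h} N^v_i \cdot \frac{n}{h_i} \;\leq\; \sum_{i=0}^{i_h} N^v_i \cdot \left\lceil \frac{n}{h_i}\right\rceil \;=\; \textsc{Congestion}(v) \;\leq\; \tau.
\end{equation*}
Summing over $v \in D$ bounds the total fix work by $O(|D|\tau)$.

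The remaining contributions are routine: $O(n^2)$ for the initialization on line~\ref{lne:deleteInit}; $O(n)$ per source per phase for computing $\textsc{Separator}^i(s)$ by bucketing the vertices of $V\setminus(D\cup C)$ according to $|\pi^{i-1}_{s,t}|$ and selecting the smallest bucket whose index lies in $(h_i/3,\, 2h_i/3)$, yielding $O(n^2 \log h)$ overall; $O(n^3 \log n / h)$ for $\textsc{DetExtDistances}$ by Lemma~\ref{lemma:ExtendDists}; and $O(|C| n^2)$ for component~(ii), including the final $|C|$-way pairwise minimum. Since $n^2 \log h$ is dominated by $n^3 \log n / h$ in the regime $h = O(n/\log n)$ of interest, summing yields the claimed $O(|D|\tau + |C| n^2 + n^3 \log n / h)$ bound.
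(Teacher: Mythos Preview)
Your charging argument for the $O(|D|\tau)$ term is exactly the paper's: repair cost $O(n/h_i)$ per broken path, charged against the $\lceil n/h_i\rceil$ units that path deposited on some $v\in D$, bounded by $\textsc{Congestion}(v)\le\tau$. Your separator bound (bucketing by stored hop count, $O(n)$ per source per phase) is in fact tighter than the paper's $O(nh_i)$ per separator, but both are absorbed.

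Where you differ is in the \emph{order} of handling $C$. The paper first patches the $h$-hop paths to be $h$-hop-improving in all of $G\setminus D$ (by re-inserting $C$ \`a la Lemma~\ref{lma:floydWarshall} at cost $O(|C|n^2)$), and only \emph{then} calls \textsc{DetExtDistances}; this way Lemma~\ref{lemma:ExtendDists} applies verbatim. You instead run \textsc{DetExtDistances} on inputs that are only guaranteed correct for pairs whose shortest paths all avoid $C$, and fix the rest afterwards with exact SSSP from/to each $c\in C$. This works, but your one-line appeal to ``Invariant~\ref{inv:phaseInvariant} combined with Lemma~\ref{lemma:ExtendDists}'' hides a nontrivial step: Lemma~\ref{lemma:ExtendDists} assumes \emph{all} input pairs are $h$-hop-improving, which you do not have. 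You need to argue a restricted version of its induction: if every shortest $s$--$t$ path in $G\setminus D$ avoids $C$, then for the separator vertex $x$ on such a path, every shortest $s$--$x$ and $x$--$t$ path also avoids $C$ (otherwise concatenation would yield a shortest $s$--$t$ path through $C$), so the inductive hypothesis propagates. Relatedly, your case split should read ``every shortest path avoids $C$'' versus ``some shortest path uses $C$''; the Invariant's hypothesis is the former, not ``some shortest path avoids $C$'' as you wrote. The conclusion survives because in the overlap case component~(ii) already gives the exact distance.

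Finally, your component~(ii) computes exact SSSP from each $c$ in $G\setminus D$ in $O(n^2)$; this is immediate via Dijkstra for non-negative weights but needs a word of justification (e.g.\ Johnson reweighting done once) in the general real-weighted setting the paper allows. The paper's insertion-style use of Lemma~\ref{lma:floydWarshall} sidesteps this by working purely with already-computed hop-bounded distances.
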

\begin{proof}
By Invariant \ref{inv:phaseInvariant}, we obtain all shortest paths of hop at most $h$ for pairs that have no shortest path through $C$. Thus, it is straightforward to adapt the procedure described in Lemma \ref{lma:floydWarshall} to return in $O(|C| n^2)$ time a collection of $h$-hop-improving shortest paths in $G \setminus D$. Finally, the Lemma \ref{lemma:ExtendDists} can be applied to recover in $O(n^3 \log n/h)$ time the shortest paths in $G \setminus D$, as required.

It remains to analyze the running time of Algorithm  \ref{alg:delete}. We note that each phase requires us to compute a separator for each vertex in $V$. Since returning the first $h_i$ edges of each path $\pi^{i-1}_{s,x}$ requires time $O(h_i)$ since we represent paths explicitly, the time required to compute a single separator in phase $i$ is at most $O(n h_i)$. Thus, the overall time to compute all separators can be bound by $O(n^2 h)$ (using a geometric sum argument for the different phases).

To bound the time spend in the foreach-loop in line \ref{lne:foreachDeleteMain}, observe that we iterate only over paths that contain a vertex in $D$ which can be detected in linear time. Observe that if a vertex $v$ in $D$ is on a path $\pi_{s,t}^i$, then the path contributed $\lceil n/h_i \rceil$ credits to the congestion of $v$ in the preprocessing procedure. Since the separator of $s$ at phase $i$ has size $O(n/h_i)$ by the arguments mentioned above, we have that the iteration to recover path $\pi_{s,t}^i$ requires time $O(n/h_i)$ (that is since checking the weight of each path and concatenation can both be implemented in constant time). Since each vertex $v \in D$ has total congestion at most $\tau$ by Lemma \ref{lemma:PebbleSum}, we can bound the total running time of the algorithm by $O(|D| \tau + n^2 h)$.
\end{proof}

Choosing $\tau = n^{2+1/4} \sqrt{\log n}, h = n^{1/4} \sqrt{\log n}$ and $\Delta = n^{1/2}$ in Lemma \ref{lma:reductionToDecr}, we obtain the following corollary. 

\begin{corollary}
Let $G$ be an $n$-vertex directed edge-weighted graph undergoing vertex insertions and deletions. Then there exists a deterministic data structure which can maintain distances in $G$ between all pairs of vertices in worst-case update time $O(n^{2+3/4} \sqrt{\log n})$.
\end{corollary}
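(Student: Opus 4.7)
The plan is to instantiate Lemma~\ref{lma:reductionToDecr} with the batch-deletion data structure constructed in this section, using \Prepro{} as the preprocessing routine and \textsc{Delete} as the batch-deletion routine, and then pick $\tau$, $h$, and $\Delta$ to balance the resulting expression.

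First I would collect the two relevant costs. By Lemma~\ref{lemma:DetPreproc} the preprocessing cost is $t_{\mathrm{pre}} = O(n^3 h)$, and the preceding lemma gives batch-deletion cost $t_{\mathrm{del}} = O(|D|\tau + |C|n^2 + n^3\log n / h)$. Using $|D|\le 2\Delta$ together with the bound $|C| = O(n^3 \log h / \tau)$ from Lemma~\ref{lemma:PebbleSum}, this simplifies to
\[
t_{\mathrm{del}} \;=\; O\!\left(\Delta\tau + \frac{n^5\log h}{\tau} + \frac{n^3\log n}{h}\right).
\]
Plugging these into the total worst-case update time $O(t_{\mathrm{pre}}/\Delta + t_{\mathrm{del}} + \Delta n^2)$ from Lemma~\ref{lma:reductionToDecr} yields the five-term expression
\[
O\!\left(\frac{n^3 h}{\Delta} + \Delta\tau + \frac{n^5\log h}{\tau} + \frac{n^3\log n}{h} + \Delta n^2\right).
\]

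The remaining work is a routine parameter optimization; the only mildly delicate point is that four of the five terms must hit the target $n^{2+3/4}\sqrt{\log n}$ simultaneously while the additive overhead $\Delta n^2$ stays strictly below it. With the proposed choices $\tau = n^{2+1/4}\sqrt{\log n}$, $h = n^{1/4}\sqrt{\log n}$, and $\Delta = n^{1/2}$, a direct verification gives $n^3 h / \Delta = \Delta \tau = n^3\log n / h = n^{2+3/4}\sqrt{\log n}$, and $n^5 \log h / \tau = O(n^{2+3/4}\sqrt{\log n})$, while $\Delta n^2 = n^{5/2}$ is strictly lower order. Finally, the side condition $\tau \ge 2n^2$ demanded by \Prepro{} is trivially satisfied by the chosen $\tau$ for all sufficiently large $n$. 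Combining these estimates establishes the claimed $O(n^{2+3/4}\sqrt{\log n})$ worst-case update time and completes the proof.
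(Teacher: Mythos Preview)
Your proposal is correct and follows exactly the approach the paper takes: it instantiates Lemma~\ref{lma:reductionToDecr} with the preprocessing and batch-deletion bounds of this section and plugs in the same parameter choices $\tau = n^{2+1/4}\sqrt{\log n}$, $h = n^{1/4}\sqrt{\log n}$, $\Delta = n^{1/2}$. The paper states only the parameter values, whereas you additionally spell out the five-term cost and verify each term, which is a welcome elaboration but not a different argument.
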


\subsection{Batch Deletion Data Structure for Unweighted Graphs}

We point out that for unweighted graphs, we can replace the Bellman-Ford procedure by a simple Breath-First-Search procedure (see for example \cite{cormen2009introduction}) which improves the running time from $O(n^2 h)$ to $O(n^2)$. This was also exploited before in \cite{AbrahamCK17}. 

\begin{corollary}
Let $G$ be an $n$-vertex directed edge-weighted graph undergoing vertex insertions and deletions. Then there exists a deterministic data structure which can maintain distances in $G$ between all pairs of vertices in worst-case update time $O(n^{2+2/3} (\log n)^{2/3})$.
\end{corollary}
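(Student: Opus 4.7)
The plan is to instantiate Lemma~\ref{lma:reductionToDecr} with the framework of Sections~3.1--3.2, specialized to the setting of this subsection: the preprocessing step of Algorithm~\ref{alg:detpreproc} uses BFS instead of Bellman--Ford to produce each $h_i$-hop-restricted single-source shortest-path tree, and the batch-deletion procedure of Algorithm~\ref{alg:delete} is applied verbatim. This specialization is exactly the one announced in the paragraph preceding the corollary and it is what enables the improved $n^{2+2/3}$-type exponent relative to the previous corollary.

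Using BFS reduces each single-source computation on line~\ref{lne:computeShortestPathsPrepro} from $O(n^2 h_i)$ to $O(n^2)$ time. Summing over $O(\log h)$ choices of $i$ and over all $n$ sources, the conclusion of Lemma~\ref{lemma:DetPreproc} improves from $O(n^3 h)$ to $t_{pre} = O(n^3 \log h)$. The structural guarantees of Lemma~\ref{lemma:PebbleSum}, in particular $|C| = O(n^3 \log h / \tau)$ and the per-vertex congestion bound $\tau$, are unaffected, because the congestion accounting in lines~\ref{lne:detForeachBegin}--\ref{lne:detForeachEnd} does not depend on how the hop-restricted paths are obtained. Invariant~\ref{inv:phaseInvariant} and the analysis of Algorithm~\ref{alg:delete} are likewise unchanged, yielding batch-deletion worst-case time
\[
t_{del} \;=\; O\!\left(|D|\,\tau + |C|\,n^2 + \frac{n^3 \log n}{h} + n^2 h\right).
\]

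Feeding $|D|\le 2\Delta$, $|C| = O(n^3 \log h/\tau)$, and the new $t_{pre}$ into Lemma~\ref{lma:reductionToDecr}, the overall worst-case update time is
\[
O\!\left(\frac{n^3 \log h}{\Delta} + \Delta\tau + \frac{n^5 \log h}{\tau} + \frac{n^3 \log n}{h} + n^2 h + \Delta n^2\right).
\]
The next step is parameter balancing. Setting $n^2 h = n^3 \log n/h$ gives $h = \Theta(n^{1/2}(\log n)^{1/2})$. Balancing $\Delta\tau = n^5 \log n/\tau$ with $\Delta\tau = n^3 \log n/\Delta$ yields the system $\Delta^2\tau^2 = n^5\log n$ and $\Delta^2\tau = n^3\log n$, whose solution is
\[
\Delta = \Theta\!\left(n^{1/3}(\log n)^{1/3}\right), \qquad \tau = \Theta\!\left(n^{7/3}(\log n)^{1/3}\right).
\]
Each of the three balanced terms then evaluates to $\Theta(n^{8/3}(\log n)^{2/3})$, which is the claimed bound. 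One should check that the side conditions hold ($\tau \ge 2 n^2$ for $n$ large enough, and $\Delta \le n$) and that the remaining summands are not larger: $\Delta n^2 = O(n^{7/3}(\log n)^{1/3})$ and $n^2 h = n^3\log n/h = O(n^{5/2}(\log n)^{1/2})$ are both dominated by $n^{8/3}(\log n)^{2/3}$.

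The only real obstacle is this parameter-tuning verification; there is no new algorithmic idea beyond substituting BFS into the preprocessing pipeline, which is precisely what the paragraph above the corollary authorizes. Everything else is inherited: preprocessing correctness from Lemma~\ref{lemma:PebbleSum}, deletion correctness from Invariant~\ref{inv:phaseInvariant}, path extension from Lemma~\ref{lemma:ExtendDists}, and the amortization-to-worst-case conversion from Lemma~\ref{lma:reductionToDecr}.
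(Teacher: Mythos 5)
Your proposal is correct and follows essentially the same route the paper intends: the corollary sits in the subsection on \emph{unweighted} graphs (the ``edge-weighted'' in the statement is a typo carried over from the preceding corollary), where the paper's only guidance is the remark that BFS replaces Bellman--Ford, dropping the per-source, per-level preprocessing cost from $O(n^2 h_i)$ to $O(n^2)$. You substitute $t_{pre}=O(n^3\log h)$ and the (unchanged) deletion/extension costs into Lemma~\ref{lma:reductionToDecr}, and the parameter balance $\Delta=\Theta((n\log n)^{1/3})$, $\tau=\Theta(n^{7/3}(\log n)^{1/3})$, $h=\Theta(\sqrt{n\log n})$ indeed makes the three dominant terms $\Theta(n^{8/3}(\log n)^{2/3})$, with $\Delta n^2$ and $n^2h + n^3\log n/h$ strictly dominated and both side conditions ($\tau\geq 2n^2$, $\Delta\leq n$) satisfied, so the arithmetic checks out.
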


In the following sections, we will not explicitly point out that the Bellman-Ford procedure can be replaced by BFS but simply state the improved bound. 

\section{Efficient Data Structures}
 
We now describe how to use the general strategy described in the previous section and describe the necessary changes to obtain efficient data structures.

\subsection{A Faster Deterministic Algorithm}

To obtain a faster algorithm, we mix our framework with the following result from by Abraham, Chechik and Forster~\cite{AbrahamCK17}. It is not explicitly stated in their paper but follows immediately by replacing their randomly sampled vertex set by an arbitrary vertex set. Informally, the data structure takes a decremental graph $G$ and a set $C \subseteq V$ of vertices and maintains for all vertices $v \in V$, the shortest-path through some vertex in $C$.

\begin{lemma}\label{lemma:CenterDS}
Given an edge-weighted directed graph $G = (V,E)$, a set $C \subseteq V$ and a hop bound $h$. Then there exists a deterministic data structure that supports the operations:
\begin{itemize}
    \item $\textsc{ACKPreprocessing}(G, C, \Delta', h)$: Initializes the data structure with the given parameters and returns a pointer to the data structure.
    \item $\textsc{ACKDelete}(D)$: assuming $D \subseteq V$, returns for each $(s,t) \in V^2$, a $h$-hop-improving shortest path $\pi_{s,t}$ through some vertex in $C \setminus D$ in $G \setminus D$ (with respect to $G \setminus D$).
\end{itemize}
The operation $\textsc{ACKPreprocessing}(G,C,h)$ runs in $O(|C|n^2 h)$ time and each operation $\textsc{ACKDelete}(D)$ runs in $O(|D| n^2 h \log n + |C|nh)$ time.
\end{lemma}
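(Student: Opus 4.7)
The plan is to mirror the center-based construction of Abraham, Chechik, and Krinninger~\cite{AbrahamCK17}, observing that the randomness in their proof only serves to argue that a random $C$ hits every long path --- a property we do not need here since $C$ is given as input. With that in mind, we may replay their deterministic subroutines on an arbitrary set $C$, and the present lemma will follow by carefully tracking the cost as a function of the parameters $|C|$ and $|D|$ rather than the expected sample sizes they use.

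For $\textsc{ACKPreprocessing}$, I would, for each $c\in C$, run $h$ rounds of Bellman-Ford from $c$ in $G$ and $h$ rounds from $c$ in $\overleftarrow{G}$, storing hop-indexed distance arrays together with predecessor pointers so that a path of hop $k$ can be read off in $O(k)$ time. Each such run costs $O(n^2 h)$, so across the $|C|$ centers the total is $O(|C|\,n^2 h)$, matching the claimed preprocessing bound.

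For $\textsc{ACKDelete}(D)$, the goal is to update, for every surviving center $c\in C\setminus D$, the hop-$h$ in- and out-trees rooted at $c$ in $G\setminus D$, and then for every pair $(s,t)$ output the $h$-hop-improving shortest path through $C\setminus D$ realizing the minimum of $\mathbf{dist}^i_{G\setminus D}(s,c) + \mathbf{dist}^{h-i}_{G\setminus D}(c,t)$ over $c\in C\setminus D$ and $i\in[0,h]$. I would maintain the hop-restricted trees using a deterministic Even--Shiloach-style structure, but with the amortization from~\cite{AbrahamCK17} that charges all tree-maintenance work to the deleted vertices: each $v\in D$ triggers at most $O(n h\log n)$ work summed over the $|C|$ trees, producing the $O(|D|\,n^2 h\log n)$ term. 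The additive $O(|C|\,n h)$ term covers the final sweep that computes the minimum above and assembles the output paths by splicing hop-$i$ and hop-$(h-i)$ subpaths out of the pointer arrays.

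The main obstacle is checking that the ACK charging scheme does not silently rely on $C$ being a random sample. Reinspecting their proof, the amortization only uses (i) deterministic hop-restricted SSSP and (ii) the fact that every change to a tree is triggered by the deletion of some $v\in D$; neither step uses the distribution of $C$, so the argument goes through verbatim. A secondary, easier point is correctness of the output: the displayed minimum is, by construction, the weight of the shortest $s$-$t$ walk in $G\setminus D$ of hop at most $h$ passing through some vertex in $C\setminus D$, which is precisely an $h$-hop-improving shortest path through $C\setminus D$ in $G\setminus D$ with respect to $G\setminus D$.
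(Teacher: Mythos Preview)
Your proposal is correct and matches the paper's own treatment: the paper gives no proof beyond the remark that the lemma ``follows immediately by replacing their randomly sampled vertex set by an arbitrary vertex set'' in~\cite{AbrahamCK17}, which is precisely your opening observation. One minor slip in your accounting: as written, ``each $v\in D$ triggers at most $O(nh\log n)$ work summed over the $|C|$ trees'' would yield $O(|D|\,nh\log n)$, not $O(|D|\,n^2 h\log n)$; you presumably mean $O(n^2 h\log n)$ work per deleted vertex across all centers.
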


It is now straight-forward to obtain a new batch deletion data structure that combines these two data structures. Intuitively, we exploit the strengths of both algorithms by setting the $\tau$-threshold of the algorithm introduced in previous section slightly lower which increases the size of the set $C$ of congested vertices but improves the running time of the data structure to maintain shortest-paths that do not contain any vertices in $C$. Since $C$ is precomputed, we then use the data structure described above to obtain the shortest-paths through some vertex in $C$. Let us now give a more formal description.

To initialize the new data structure, we invoke algorithm \ref{alg:detpreproc} with parameters $\tau$ and $h$ to be fixed later. The algorithm gives a data structure $\mathcal{D}_1$ and a set $C$ is of size $O(n \log n /\tau)$. We then handle $2\Delta$ updates as follows: At initialization and every $\Delta'$ updates, we compute a data structure $\mathcal{D}_2$ by invoking the preprocessing algorithm in Lemma \ref{lemma:CenterDS} with parameters $C$ and $h$. We later chose $\tau$ larger than in the last section which implies that we can increase $\Delta$, and take care of the shortest paths through $C$ by recomputing $\mathcal{D}_2$ more often, i.e. we set $\Delta' \ll \Delta$. Since the preprocessing time of $\mathcal{D}_2$ is smaller, this can be balanced efficiently such that both have small batch update time at all times.

For each update, we let $D_1$ be the batch of deletions since $\mathcal{D}_1$ was initialized and $D_2$ the batch of deletions since $\mathcal{D}_2$ was initialized. We then invoke $\mathcal{D}_1.\textsc{Delete}(D_1)$ and $\mathcal{D}_2.\textsc{ACKDelete}(D_2)$ and combine the results in a straight-forward manner. This concludes the algorithm.

Using the reduction \ref{lma:reductionToDecr}, and using that $|D_1| \leq \Delta$ and $D_2 \leq \Delta'$, we obtain an algorithm with worst-case running time  
\[
  O(n^3h/\Delta + \Delta n^2 + n^3(\log n)^2/h + \Delta\tau + \Delta' hn^2\log n + n^4h(\log h)/\tau + n^5 h(\log h)/(\tau\Delta'))
\]
which is optimized by setting $\tau = n^{7/3}h^{2/3}(\log n\log h)^{1/3}/\Delta^{2/3}$, $h = n^{2/7}(\log n)^{6/7}$, $\Delta = \sqrt{n}h^{1/4}/(\log n\log h)^{1/4}$, and $\Delta'=\sqrt{n^3(\log h)/(\tau\log n)}$.

\begin{corollary}
Let $G$ be an $n$-vertex directed edge-weighted graph undergoing vertex insertions and deletions. Then there exists a deterministic data structure which can maintain distances in $G$ between all pairs of vertices in worst-case update time $O(n^{19/7}(\log n)^{8/7})$. If the graph is unweighted, the running time can be improved to $O(n^{2.6}\log n)$.
\end{corollary}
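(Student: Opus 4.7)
The plan is to invoke Lemma~\ref{lma:reductionToDecr} on the combined batch-deletion data structure already sketched in the preceding paragraph. I would maintain $\mathcal{D}_1:=\Prepro(G,\tau,h)$, rebuilt every $\Delta$ updates, together with $\mathcal{D}_2:=\textsc{ACKPreprocessing}(G,C,h)$ on the congested set $C$ produced by $\mathcal{D}_1$, rebuilt every $\Delta'$ updates with $\Delta'\ll\Delta$. On each batch deletion $D$, I invoke $\mathcal{D}_1.\textsc{Delete}(D_1)$ and $\mathcal{D}_2.\textsc{ACKDelete}(D_2)$ (where $D_i$ is the batch accumulated since the last rebuild of $\mathcal{D}_i$), take the cheaper of the two candidate paths for each pair $(s,t)$, and run $\textsc{DetExtDistances}$ once at the end.

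Correctness is essentially free from what is already in place: Invariant~\ref{inv:phaseInvariant} applied to $\mathcal{D}_1$ produces, for every pair $(s,t)$, an $h$-hop-improving shortest path in $G\setminus D$ that avoids $C$, while Lemma~\ref{lemma:CenterDS} applied to $\mathcal{D}_2$ produces an $h$-hop-improving shortest path through $C\setminus D_2$ in $G\setminus D$; taking the minimum yields an $h$-hop-improving shortest path in $G\setminus D$, which Lemma~\ref{lemma:ExtendDists} promotes to a genuine shortest path. For the running time I would account separately for (i) the $O(n^3h)$ preprocessing of $\mathcal{D}_1$ amortised over $\Delta$ updates, (ii) the $O(\Delta n^2)$ reinsertion cost of Lemma~\ref{lma:reductionToDecr}, (iii) the $O(|D_1|\tau + n^3\log n/h)$ deletion cost of $\mathcal{D}_1$ (without the $|C|n^2$ term, since paths through $C$ are now served by $\mathcal{D}_2$), (iv) the $O(|D_2|n^2h\log n + |C|nh)$ deletion cost of $\mathcal{D}_2$, and (v) the $O(|C|n^2h)$ preprocessing of $\mathcal{D}_2$ amortised over $\Delta'$ updates. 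Substituting $|D_1|\le\Delta$, $|D_2|\le\Delta'$ and $|C|=O(n^3\log h/\tau)$ from Lemma~\ref{lemma:PebbleSum}, and noting the extra $\log n$ factor introduced by calling $\textsc{DetExtDistances}$ once per update, yields exactly the displayed expression
\[
O\!\left(\tfrac{n^3h}{\Delta}+\Delta n^2+\tfrac{n^3(\log n)^2}{h}+\Delta\tau+\Delta' hn^2\log n+\tfrac{n^4 h\log h}{\tau}+\tfrac{n^5 h\log h}{\tau\Delta'}\right).
\]

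The only remaining work, and the genuine obstacle, is parameter optimisation. I would eliminate the variables in order of nestedness: first balance $\Delta\tau$ against $n^4h\log h/\tau$ to obtain $\tau=n^{7/3}h^{2/3}(\log n\log h)^{1/3}/\Delta^{2/3}$; then balance $\Delta' h n^2\log n$ against $n^5h\log h/(\tau\Delta')$ to obtain $\Delta'=\sqrt{n^3\log h/(\tau\log n)}$; then balance $\Delta n^2$ against $\Delta\tau$ (after the $\tau$ substitution) to obtain $\Delta=\sqrt{n}\,h^{1/4}/(\log n\log h)^{1/4}$; and finally balance $n^3h/\Delta$ against $n^3(\log n)^2/h$ to obtain $h=n^{2/7}(\log n)^{6/7}$. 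Plugging back in, the dominant term $n^3(\log n)^2/h$ becomes $n^{19/7}(\log n)^{8/7}$, and a routine check confirms every other term is no larger, giving the claimed deterministic bound.

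For the unweighted variant I would repeat the argument verbatim but replace the Bellman--Ford subroutine inside $\Prepro$ (and, analogously, inside $\textsc{ACKPreprocessing}$/$\textsc{ACKDelete}$) by BFS, as noted in the preceding subsection. This strips a factor of $h$ from the preprocessing costs in items (i) and (v) and from the deletion cost in (iv); rerunning the same balancing procedure with the reduced exponents yields the sharpened bound $O(n^{2.6}\log n)$.
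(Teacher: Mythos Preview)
Your proposal is correct and follows the paper's approach exactly: the same two-data-structure combination of $\mathcal{D}_1$ and $\mathcal{D}_2$, the same displayed running-time expression, the same final parameter values, and the same BFS-for-Bellman--Ford substitution in the unweighted case. One minor caveat: the pairwise balancing steps you describe do not actually produce the formulas you state (for instance, equating $\Delta\tau$ with $n^4h\log h/\tau$ yields $\tau = n^2\sqrt{h\log h/\Delta}$, not the cube-root expression); the paper simply asserts the optimal values without derivation, and a clean route is to first eliminate $\Delta'$ via your second balance and then equate $\Delta\tau$ with the resulting $\Delta'$-term, but since your final values coincide with the paper's this is purely a presentation slip.
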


\subsection{A Simple and Space-Efficient Deterministic Data Structure}
\label{subsec:spaceEfficientDet}

In order to reduce space, we replace the procedure  $\textsc{BellmanFord}(s, G[V \setminus C], h_i)$ in the preprocessing at line \ref{lne:computeShortestPathsPrepro}  by procedure $\textsc{BellmanFordSpaceEfficient}(s, G[V \setminus C], h_i)$ that is depicted in algorithm \ref{alg:detBellmanFord}. Unlike the Bellman-Ford algorithm, our algorithm does not return the $h_i$-restricted shortest paths but instead returns $h_i$-improving shortest paths of length at most $O(h_i)$. Using that the length of each $h_i$-improving shortest paths is $O(h_i)$, it can be verified that the proof of lemma \ref{lemma:PebbleSum} still holds under these conditions. Moreover, the information computed by $\textsc{BellmanFordSpaceEfficient}(s, H, h_i)$ can be efficiently stored in $\tilde{O}(n)$ space.

\begin{algorithm}
\caption{$\textsc{BellmanFordSpaceEfficient}(s, H, h_i = (3/2)^i)$}
\label{alg:detBellmanFord}
\KwIn{A graph $H$ and source $s \in H(V)$, an integer $h_i = (3/2)^i \geq 1$.}
\KwOut{The algorithm returns a set of $h_i$-hop-improving shortest paths $\{\pi^i_{s,t}\}_{t\in V}$ each of length $O(h_i)$ that can be represented in space $\tilde{O}(n)$. }

\BlankLine

$H' \gets H; j \gets i$\;
\For{$j \gets i$ down to $0$}{
    $\{\tau^j_{s,t}\}_{t\in V} \gets \textsc{BellmanFord}(s, H^j, h_j)$\;
    
    \If{$h_j > 3$}{
        Compute an integer $\textsc{Rad}^{j}(s) \in (\frac 1 3 h_j,\frac 2 3 h_j)$ that minimizes the size of  $\textsc{Separator}^j(s) = \{x \in V \vert |\pi^{j}_{s,x}| = \textsc{Rad}^{j}(s)\}$ \label{lne:deleteSeparator}
    }\Else{$\textsc{Separator}^j(s) \gets V$}
    
    \ForEach(\label{lne:storePaths}){$x \in \textsc{Separator}^j(s)$}{
        Store path $\tau^{j}_{s,x}$\;
        $w'(s,t) \gets w(\tau^{j}_{s,x})$
    }
}
\end{algorithm}

The algorithm runs in iterations executed by the for-loop where the index $j$ is initially set to $i$ and decreased after every iteration until it is $0$. In each iteration, we compute the hop-$h_j$-restricted shortest paths $\{\tau^j_{s,t}\}_{t\in V}$ on the graph $H'$. For the sake of analysis, we let $H^j$ be the graph of $H'$ at the start of iteration $j$. After computing the paths on $H^j$, we compute a separator set $\textsc{Separator}^j(s)$ that contains all vertices whose shortest path from $s$ has length $\textsc{Rad}^j(s)$ which is taken to be strictly between $\frac 1 3 h_j$ and $\frac 2 3 h_j$ (except for very small $h_j$ where we chose the separator set to be $V$). 

In the foreach-loop in line \ref{lne:storePaths}, we store for every vertex $x$ in the hitting set $\textsc{Separator}^j(s)$, the $h_j$-hop-restricted shortest path $\tau^j_{s,x}$. If the first edge on the path represents a subpath from a higher level, we add a pointer to the subpath. Then, we update $H^j$ by setting the weight of the edge from $s$ to $x$ to the weight of $\tau^j_{s,x}$. Observe that after the foreach-loop finished, all paths $\tau^j_{s,t}$, for any $t \in V(H)$, can be mapped to a path in $H'$ of same weight and of length at most $h_{j-1} = \frac 2 3 h_j$ and observe that this graph is graph $H^{j-1}$. 

Finally, we store the paths $\pi^i_{s,t}$ by a pointer to $\tau^0_{s,t}$. Observe that each path $\tau^0_{s,t}$ might then be unpacked to an $h_i$-improving shortest path in $H$ by replacing the first edge on a path by the corresponding subpath on a higher level.

\begin{lemma}\label{lma:spaceEfficientBellman}
The procedure $\textsc{BellmanFordSpaceEfficient}(s, H, h_i)$ computes a collection of $h_i$-hop-improving shortest paths $\{\pi^i_{s,t}\}_{t\in V}$ from source $s$ where each path is of length at most $O(h_i)$ and provides a $O(n \log h_i)$ sized data structure such that:
\begin{enumerate}
    \item Each path $\pi^i_{s,t}$ can be extracted from the data structure in time $O(|\pi^i_{s,t}|)$, and
    \item $\forall u \in V$, we can identify all paths $\Pi_u = \{\pi^i_{s,t} | u \in \pi^i_{s,t} \}$ that contain $u$ in time $O(|\Pi_u|)$.  
\end{enumerate}
The procedure takes time $O(n^2 h_i)$.
\end{lemma}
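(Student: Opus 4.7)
I would argue by downward induction on $j\in\{i,\ldots,0\}$ the invariant that at the start of iteration $j$, the graph $H^j$ consists of the original edges of $H$ together with, for each $j'>j$ and each $x\in\textsc{Separator}^{j'}(s)$, a shortcut edge $(s,x)$ of weight $w(\tau^{j'}_{s,x})$; moreover, every $h_i$-hop shortest $s$-path in $H$ has a representation in $H^j$ using at most $h_j$ hops of the same weight. The inductive step reuses the separator argument from Algorithm~\ref{alg:delete}: choosing $\textsc{Rad}^j(s)\in(\tfrac{1}{3}h_j,\tfrac{2}{3}h_j)$ to minimise the layer size gives $|\textsc{Separator}^j(s)|=O(n/h_j)$ by pigeonhole over the $\Theta(h_j)$ candidate radii, and any path in $H^j$ with hop count in $(h_{j-1},h_j]$ crosses this layer at a separator vertex, so installing the corresponding shortcut in $H^{j-1}$ realises the same path with $\le h_{j-1}$ hops and the same weight. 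At $j=0$, unfolding $\tau^0_{s,t}$ therefore produces an $h_i$-hop-improving shortest path in $H$. For the length bound, every shortcut emanates from $s$, so a simple path uses at most one shortcut per recursion level; the telescoping total hop count is $\sum_{j=0}^{i_h}h_j=O(h_i)$ by a geometric series.

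\textbf{Space, time, and single-path extraction.} At level $j$ with $h_j>3$, the algorithm stores $O(n/h_j)$ paths of length $\le h_j$, contributing $O(n)$ words per level; for $h_j\le 3$ it stores $O(n)$ paths of constant length, again $O(n)$. Summing over $O(\log h_i)$ levels gives $O(n\log h_i)$ space. The Bellman-Ford call at level $j$ dominates the level's cost at $O(n^2 h_j)$ (separator construction and shortcut storage only cost $O(nh_j)$), and geometric summation yields $O(n^2 h_i)$ total. To extract $\pi^i_{s,t}$, traverse the chain $\tau^0_{s,t}\to\tau^{a_1}_{s,t}\to\tau^{a_2}_{s,y_1}\to\cdots$ by following the shortcut pointer stored on the first edge of each segment. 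Since only the first edge of each segment can be a shortcut and each segment contributes its remaining edges directly to the final path, the traversal runs in $O(|\pi^i_{s,t}|)$ time.

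\textbf{Enumerating $\Pi_u$ and main obstacle.} For the second query I would augment the output with a parent-pointer tree $T$ on $V$, setting $\mathrm{pred}(v)$ to the predecessor of $v$ along $\pi^i_{s,v}$; this is computable in $O(\log h_i)$ per vertex by descending the expansion chain of $\tau^0_{s,v}$ until reaching the deepest segment whose last edge is an \emph{original} edge into $v$, for $O(n\log h_i)$ total. A final linear pass builds child-lists. To answer $\Pi_u$, run DFS on the child-lists rooted at $u$; since $\Pi_u$ coincides with the descendant set of $u$ in $T$, this takes $O(|\Pi_u|)$ time, and $T$ itself uses $O(n)$ space on top of the stored paths. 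The delicate point, and the one I expect to be the main obstacle, is justifying tree-consistency: for every $u$ on $\pi^i_{s,t}$, the prefix of $\pi^i_{s,t}$ ending at $u$ must coincide with the separately-computed $\pi^i_{s,u}$, so that $u$ really is an ancestor of $t$ in $T$. This reduces to propagating Bellman-Ford's subpath optimality through the sequence of graphs $H^i, H^{i-1}, \ldots, H^0$, and I would handle it by fixing a canonical Bellman-Ford tie-breaking rule (lexicographic in weight, hop-count, then predecessor-index) and verifying by induction that shortcut installation commutes with this rule, so that the unfolded paths inherit the tree structure produced at level $0$.
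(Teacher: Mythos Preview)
Your treatment of correctness, the $O(h_i)$ length bound, the $O(n\log h_i)$ space bound, the $O(n^2h_i)$ running time, and single-path extraction is essentially the paper's argument; those parts are fine.

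The gap is in your handling of item~2. The paper does not build a predecessor tree on $V$; instead it simply makes the existing unfold pointers bidirectional. That is, for each stored segment $\tau^{j}_{s,x}$ it records which lower-level stored segments $\tau^{j'}_{s,y}$ point to it via their first-edge shortcut. Together with a reverse index from each vertex $u$ to the stored segments containing it, one answers the query by locating all stored segments through $u$ and then following the reversed pointers down to level~$0$; the level-$0$ endpoints are exactly $\Pi_u$. No tree-consistency of the unfolded paths $\{\pi^i_{s,t}\}_t$ is needed, because the traversal happens in the segment/pointer structure, not in a tree on $V$.

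Your proposed fix---a canonical Bellman--Ford tie-breaking rule---does not yield the consistency you need. The obstacle is not tie resolution within a single Bellman--Ford run; it is that the graphs $H^i,H^{i-1},\ldots,H^0$ are different. Concretely, take an interior vertex $u$ on a high-level stored segment $\tau^{j}_{s,x}$ that appears in the unfold of $\pi^i_{s,t}$. Since $h_1,h_2\le 3$ force $\textsc{Separator}^1(s)=\textsc{Separator}^2(s)=V$, the shortcut $(s,u)$ in $H^0$ is last overwritten at level~$1$, so $\pi^i_{s,u}$ unfolds through $\tau^{1}_{s,u}$ (and whatever higher level its first edge points to), which in general has nothing to do with the prefix of $\tau^{j}_{s,x}$ up to $u$. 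Thus $\mathrm{pred}(u)$ computed from $\pi^i_{s,u}$ need not be the predecessor of $u$ on $\pi^i_{s,t}$, and your DFS on $T$ can both miss targets in $\Pi_u$ and report targets not in $\Pi_u$. No choice of tie-breaking at a fixed level can reconcile segments chosen at different levels in different auxiliary graphs; you would have to change the unfold rule itself, at which point you are essentially rebuilding the paper's bidirectional-pointer structure.
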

\begin{proof}
We argued above that every hop $h_j$-restricted shortest path in $H^j$ can be mapped to a $h_{j-1}$-restricted shortest path in $H^{j-1}$. Thus, computing the $h_{j-1}$-restricted shortest path using Bellman-Ford on $H^{j-1}$ returns $h_j$-hop-improving shortest paths. By a simple inductive argument, it follows that every shortest path $\pi_{s,t}^j$ for any $j$ is $h_i$-hop-improving in regard to $H$.

To see that every path $\pi^i_{s,t}$ is of length $O(h_i)$ observe that on level $j$, we add at most $h_j-1$ new edges to the path since the only subpaths that we replace by shortcuts are $s$ to $t$ paths. Thus the final path corresponds to a path of length $O(\sum_{j = 0}^{i} h_j) = O(h_i)$.

To see that the data structure requires only $\tilde{O}(n)$ space, observe that at iteration $j$, each path $\tau^j_{s,t}$ computed on $H^j$ consists of at most $O(h_j)$ edges that need to be stored explicitly and a pointer to a higher level subpath corresponding to the first edge of $\tau^j_{s,t}$. Since $|\textsc{Separator}^j(s)| = O(n /h_j)$, and we only store paths $\tau^j_{s,x}$ to each $x \in \textsc{Separator}^j(s)$, we therefore only require space $O(\sum_{j=0}^i h_j \cdot n /h_j) = O(n \log h_i)$

We can further implement the pointers for the subpath corresponding the first edge on a path $\pi^i_{s,t}$ to point to the next higher level where the subpath is non-trivial (i.e. not itself an edge). Thus following a pointer we can ensure to add at least one additional edge to the path, and therefore we can extract the path in time $O(|\tau^i_{s,t}|)$. Making pointers of the structure bidirectional, we can also find all paths $\pi_{s,t}^i$ containing a vertex $u$ in linear time. The overall running time is dominated by running Bellman-Ford, which takes $O(\sum_{j=0}^i n^2 h_j) = O(n^2h_i)$ time.
\end{proof}

The lemma gives a straight-forward way to verify that Lemma \ref{lemma:DetPreproc} and \ref{lemma:PebbleSum} hold even by using the relaxed Bellman-Ford procedure. The corollary below follows.

\begin{corollary}
Let $G$ be an $n$-vertex directed edge-weighted graph undergoing vertex insertions and deletions. Then there exists a deterministic data structure which can maintain distances in $G$ between all pairs of vertices in worst-case update time $O(n^{2+3/4} (\log n)^{2/3})$ using space $\tilde{O}(n^2)$. If the graph is unweighted, the running time can be improved to $O(n^{2+2/3} (\log n)^{2/3})$.
\end{corollary}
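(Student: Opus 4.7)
The plan is to take the simple deterministic batch-deletion data structure of Section~3 essentially unchanged, and replace only the call to \textsc{BellmanFord} at line~\ref{lne:computeShortestPathsPrepro} of \Prepro{} by \textsc{BellmanFordSpaceEfficient} from Algorithm~\ref{alg:detBellmanFord}. The resulting structure has the same time guarantees as before but, by Lemma~\ref{lma:spaceEfficientBellman}, stores the preprocessed paths in only $\tilde O(n^2)$ words in total.

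The first step is to re-verify Lemma~\ref{lemma:DetPreproc} and Lemma~\ref{lemma:PebbleSum} in this setting. The only properties of the paths produced by \textsc{BellmanFord} that the proofs use are (a) that each $\pi^i_{s,t}$ is $h_i$-hop-improving in $G[V\setminus C]$ and (b) that its number of edges is $O(h_i)$, so that the congestion updates add $O(n)$ per path. Lemma~\ref{lma:spaceEfficientBellman} supplies exactly these properties, and it runs in $O(n^2 h_i)$ per call, so the geometric-sum argument of Lemma~\ref{lemma:DetPreproc} still gives $O(n^3 h)$ preprocessing time, and the bounds of Lemma~\ref{lemma:PebbleSum} are preserved verbatim.

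The second step is to check that Algorithm~\ref{alg:delete} still runs within the same worst-case batch-deletion budget on the compressed representation. Property~1 of Lemma~\ref{lma:spaceEfficientBellman} allows enumerating the vertices of any $\pi^{i-1}_{s,t}$ in time proportional to its hop length, which is enough to compute the layers used by line~\ref{lne:deleteSeparator} and hence $\textsc{Separator}^i(s)$ in $O(nh_i)$ time per source. Property~2 lets us identify, for every deleted vertex $v\in D$, the precomputed paths containing $v$ in time linear in their number; since a single $\pi^{i-1}_{s,t}$ contributed at least $\lceil n/h_{i-1}\rceil$ congestion to each of its vertices, summing over $v\in D$ and levels still yields the $O(|D|\tau)$ bound of the earlier analysis. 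Each concatenation $\pi^{i-1}_{s,x}\circ\pi^{i-1}_{x,t}$ is represented implicitly as a pair of pointers together with the pre-stored weight, so the minimum over the $O(n/h_i)$ candidates in $\textsc{Separator}^i(s)\cup\{t\}$ costs $O(n/h_i)$ per affected path. Feeding the final $\{\pi^{i_h}_{s,t}\}$ through a Floyd--Warshall-style step over $C$ and through \textsc{DetExtDistances} closes out the batch deletion in the same $O(|D|\tau+|C|n^2+n^3\log n/h)$ time as before.

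The third step is the space accounting. Summing the storage of Lemma~\ref{lma:spaceEfficientBellman} over the $n$ sources and over $i=0,\ldots,\lceil\log_{3/2}h\rceil$ gives $O(n^2\log^2 h)=\tilde O(n^2)$ words for all preprocessed paths; the distance matrix, the weight cache, the pointer tables maintained by \textsc{Delete}, and the auxiliary arrays of \textsc{DetExtDistances} each add only $O(n^2)$. Plugging the same parameter choices as in Section~3 into Lemma~\ref{lma:reductionToDecr}, with the modest extra $\log h$ factor introduced by the compressed representation absorbed into the exponent of the log, yields the advertised $O(n^{2+3/4}(\log n)^{2/3})$ update time and $\tilde O(n^2)$ space; for unweighted graphs, replacing Bellman--Ford by BFS inside \textsc{BellmanFordSpaceEfficient} gives the $O(n^{2+2/3}(\log n)^{2/3})$ bound. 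The main delicacy I expect is making sure that the in-place updates performed by Algorithm~\ref{alg:delete} do not destroy the multi-level pointer structure of the preprocessed paths; I would handle this by treating the preprocessed structure as read-only and writing the updated $\pi^i_{s,t}$ into a separate $O(n^2)$ table of pointer pairs, which keeps both the extraction/enumeration guarantees of Lemma~\ref{lma:spaceEfficientBellman} and the total space within $\tilde O(n^2)$.
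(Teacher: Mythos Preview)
Your proposal is correct and follows essentially the same approach as the paper: replace \textsc{BellmanFord} by \textsc{BellmanFordSpaceEfficient} in line~\ref{lne:computeShortestPathsPrepro}, observe via Lemma~\ref{lma:spaceEfficientBellman} that Lemmas~\ref{lemma:DetPreproc} and~\ref{lemma:PebbleSum} continue to hold, and conclude. The paper is in fact much terser than you are here---it simply asserts that the two lemmas still go through and states the corollary---so your additional care in verifying that Algorithm~\ref{alg:delete} interacts correctly with the compressed path representation (via properties~1 and~2 of Lemma~\ref{lma:spaceEfficientBellman}), your explicit space accounting, and your remark about keeping the preprocessed structure read-only are all welcome elaborations rather than deviations.
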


\subsection{A Space-Efficient and More Robust Las-Vegas Algorithm}

In this section, we present a simple randomized procedure that allows to refine the approach of our framework. On a high-level, we set the congestion threshold for each vertex quite small (very close to $n^2$). Whilst this implies that our set of congested vertices $C$ is quite large, we ensured that we have the paths in $G \setminus C$ covered for many deletions. We then try to fine recursively all paths through vertices in $C$ with slightly larger congestion threshold. By shrinking the set $C$ in each iteration, we speed-up the proprecessing procedure and therefore we can re-compute the data structure more often. We point out that even though our layering process again gives an efficient data structure to maintain paths that go through vertices in $C$, it does not rely on the techniques by Abraham, Chechik and Krinninger \cite{AbrahamCK17}.

\begin{algorithm}
\caption{$\textsc{RandPreprocessing}(G, C_{in}, \tau, h)$}
\label{alg:randpreproc}
\KwIn{A graph $G=(V,E)$, a positive integer $h > 0$ determining the maximum hop and an integer $\tau \geq 2n^2$ regulating the congestion.}
\KwOut{A tuple $(C_{out}, \{\textsc{Congestion}(v)\}_{v \in V},\{\pi^i_{s,t}\}_{s,t\in V,i\in\{0,\ldots,i_h\}})$ with the properties of Lemma~\ref{lemma:PebbleSumRand}.}
\BlankLine
$C_{out} \gets \emptyset$\;
\lForEach{$v \in V$}{$\textsc{Congestion}(v) \gets 0$}
\lForEach{$s,t\in V, i \in [0, i_h]$}{$\pi^i_{s,t} \gets \nil$}

$X \gets C_{in}$  \;
\While(\label{lne:randWhileLoop}){$X \neq \emptyset$}{
    Remove a center $c$ uniformly at random from $X$\label{lne:randPickRandom}\; 
    
    \ForEach(\label{lne:forlooprand}){$i \in [0, i_h]$}{
    $\{\pi^i_{c,t}\}_{t\in V} \gets \textsc{BellmanFordSpaceEfficient}(c, G[V \setminus C_{out}], h_i)$ \label{lne:randComputeShortestPathsPrepro}\;
    $\{\pi^i_{s,c}\}_{s\in V} \gets \textsc{BellmanFordSpaceEfficient}(c, \overleftarrow{ G[V \setminus C_{out}]}, h_i)$ \;
        
        \ForEach(\label{lne:foreachImprovedPair}){$s,t \in V$ with $w(\pi_{s,c}^i \circ \pi_{c,t}^i) < w(\pi_{s,t}^i) $}{
            $\pi_{s,t}^i \gets \pi_{s,c}^i \circ \pi_{c,t}^i$\;
            \ForEach(\label{lne:randForeach2Begin}){$u \in \pi^i_{s,t}$}{
                $\textsc{Congestion}(u) \gets \textsc{Congestion}(u) + \lceil n / h_i \rceil$\label{lne:addPairCongestion}\;
            }
            \If(\label{lne:ifRandCongestionExceeded}){$C_{out} \neq \{v \in V| \textsc{Congestion}(v) > \tau/2\}$}{
                    $C_{out} \gets \{v \in V| \textsc{Congestion}(v) > \tau/2\}$\;\label{lne:randForeachEnd}
                    $\{\pi^i_{c,t}\}_{t\in V} \gets \textsc{BellmanFordSpaceEfficient}(c, G[V \setminus C_{out}], h_i)$ \label{lne:ReComputeShortestPathsPrepro}\;
                    $\{\pi^i_{s,c}\}_{s\in V} \gets \textsc{BellmanFordSpaceEfficient}(c, \overleftarrow{ G[V \setminus C_{out}]}, h_i)$ \;
                }
        }
    }
}
\end{algorithm}

We start by presenting an adapted version of the preprocessing algorithm \ref{alg:detpreproc} that is depicted in algorithm \ref{alg:randpreproc}. The new algorithm takes a set $C_{in}$ of vertices and the goal of the procedure is to produce $h$-hop-improving shortest paths through the vertices in $C_{in}$ in the graph $G[V \setminus C_{out}]$ where $C_{out}$ is a set of vertices that are congested over the course of the algorithm. Instead of taking vertices from $X$ in arbitrary order, we now sample a vertex $c$ uniformly at random in each iteration. We then compute $h_i$-hop-improving shortest paths from and to $c$ by invoking the adapted Bellman-Ford procedure on the original and the reversed graph.

We then test whether the concatenation of $\pi^i_{s,c} \circ \pi^i_{c,t}$ has lower weight than the current path from $s$ to $t$. If so, we add $\lceil n/h_i \rceil$ units of congestion to each vertex $u$ on the path $\pi^i_{s,c} \circ \pi^i_{c,t}$. In contrast to previous algorithms, if the congestion of one of the vertices $u$ exceeds $\tau$, we immediately remove $u$ from the graph and recompute the paths through the vertex $c$ in the new graph. Let us now analyze the algorithm. 

\begin{lemma}\label{lemma:PebbleSumRand}
At termination of $\textsc{RandPreprocessing}(G, C_{in}, \tau, h)$, the algorithm ensures that
\begin{enumerate}
    \item $\forall v \in V$: $\textsc{Congestion}(v) \leq \tau$,
    \item $\sum_{v\in V} \textsc{Congestion}(v) = O(n^3 (\log n)^3)$, and
    \item $|C_{out}| = O(n^3 (\log n)^3/ \tau)$.
    \item Each computed path $\pi^i_{s,t}$ is a \emph{$h_i$-hop-improving shortest path} in $G$ through a vertex in $C_{in}$ with regard to $G \setminus C_{out}$.
\end{enumerate}
The algorithm runs in time $O(|C_{in} \cup C_{out}|n^2 h)$ and the properties are satisfied with probability $1-n^{-c}$ for any constant $c > 0$.
\end{lemma}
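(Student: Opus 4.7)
I would prove each of the four properties and the running time bound in turn, following the skeleton of Lemma~\ref{lemma:PebbleSum} but with a random-permutation analysis in place of deterministic accounting. Property~1 follows essentially the same invariant argument as before: after every iteration of the inner foreach-loop over pairs $(s,t)$ in line~\ref{lne:foreachImprovedPair}, the algorithm re-synchronizes $C_{out}$ with the set of vertices whose congestion exceeds $\tau/2$ (line~\ref{lne:ifRandCongestionExceeded}), so at the start of each such iteration every vertex not in $C_{out}$ has congestion at most $\tau/2$; since a single iteration adds at most $\lceil n/h_i \rceil \le n$ to each of the $O(h_i)$ vertices on the returned path (by Lemma~\ref{lma:spaceEfficientBellman}), no vertex's congestion ever exceeds $\tau/2 + n \le \tau$, using $\tau \ge 2n^2$. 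For Property~4, fix a pair $(s,t)$ and suppose the $h_i$-hop-shortest $s$-to-$t$ path in $G \setminus C_{out}$ through some $c^* \in C_{in}$ has weight $W$; when $c^*$ is sampled in line~\ref{lne:randPickRandom}, the current set $C_{out}^{(c^*)}$ is contained in the final $C_{out}$, so $G \setminus C_{out}^{(c^*)}$ is a supergraph of $G \setminus C_{out}$, and by Lemma~\ref{lma:spaceEfficientBellman} the computed $\pi_{s,c^*}^i$ and $\pi_{c^*,t}^i$ are $h_i$-hop-improving in the larger graph, so their concatenation has weight at most $W$, and $\pi_{s,t}^i$ is then updated to weight at most $W$ and can only decrease afterwards.

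For Property~2, each improvement event in line~\ref{lne:foreachImprovedPair} adds $O(n)$ congestion in total (length $O(h_i)$ times $\lceil n/h_i \rceil$ per vertex), so the claim reduces to bounding the total number of improvement events. I aim to show that for every fixed pair $(s,t)$ and every level $i$, the number of improvements is $O(\log^2 n)$ with probability at least $1 - n^{-c'}$ for a sufficiently large constant $c'$; a union bound over the $O(n^2 \log n)$ (pair, level) combinations combined with the $O(n)$ congestion per improvement yields the claimed $O(n^3 \log^3 n)$ bound on total congestion. Property~3 is then immediate from $|C_{out}| \cdot (\tau/2) \le \sum_v \textsc{Congestion}(v)$, and the runtime $O(|C_{in} \cup C_{out}| \cdot n^2 h)$ follows by accounting: each center in $C_{in}$ incurs $\sum_i O(n^2 h_i) = O(n^2 h)$ base Bellman-Ford work across all levels, and each of at most $|C_{out}|$ growths of $C_{out}$ triggers one additional pair of Bellman-Ford calls of total cost $O(n^2 h)$.

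The main obstacle is proving the single-(pair, level) bound on improvements. In the classical setting where each center $c$ has a fixed weight $W_c$ and centers are processed in a uniformly random order, the number of left-to-right minima is $O(\log m)$ in expectation and $O(\log^2 m)$ with high probability by standard concentration for the number of records in a random permutation. The complication here is that the weight actually compared at position $j$, namely the weight $\tilde{W}_{c_j}$ of the $s \to c_j \to t$ concatenation computed on $G \setminus C_{out}^{(c_j)}$, depends on which centers appeared before $c_j$ in the random order. I would address this by observing that $C_{out}^{(c_j)}$ grows monotonically in $j$, so $\tilde{W}_{c_j}$ is upper bounded by the weight $W_{c_j}^{\star}$ of the corresponding path in the final graph $G \setminus C_{out}$, and then either coupling the real process to an idealized one that always uses $G \setminus C_{out}$ or invoking a Doob-martingale argument over the random permutation to control the dependence of $\tilde{W}_{c_j}$ on earlier samples. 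An alternative is to expose the permutation one sample at a time and directly bound the probability that the next sample improves the running minimum: by symmetry and monotonicity of $C_{out}^{(c_j)}$, this probability is $O(1/j)$ at step $j$, yielding the logarithmic expectation and Chernoff-type concentration on top.
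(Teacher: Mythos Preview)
Your proposal matches the paper's proof in structure and in the arguments for Properties~1, 3, 4, and the running-time bound (the paper bounds the per-iteration increase by $2n\le\tau/2$ rather than $n$, and dispatches Property~4 by citing Lemma~\ref{lma:spaceEfficientBellman} where you spell out the supergraph monotonicity explicitly, but these are cosmetic).

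The substantive difference is in Property~2. The paper also reduces to counting improvement events for each fixed $(s,t,i)$, but it handles the dependence you flag by comparing to the idealized process that runs Bellman--Ford on the \emph{original} graph $G$ throughout, not on the random final graph $G\setminus C_{out}$. In that idealized process each center has a value that is fixed independently of the permutation, so the $1/j$ probability and the Chernoff concentration are the classical facts about records in a uniformly random permutation; the bridge to the real process is the single assertion that, since deleting vertices can only increase distances, the real per-pair congestion $Y^i_{s,t}$ is stochastically dominated by the idealized $X^i_{s,t}$. This is more direct than the alternatives you sketch: coupling to $G\setminus C_{out}$ is circular because $C_{out}$ itself depends on the permutation, the Doob-martingale route is left unspecified, and the ``symmetry'' you invoke for the direct $O(1/j)$ bound is exactly what the dependence of $\tilde W_{c_j}$ on the prefix breaks. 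The paper's comparison to $G$ sidesteps all of this by pushing the randomness into a process with fixed values, at the cost of the one stochastic-domination step.
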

\begin{proof}
We maintain again the loop invariant for the while-loop in line \ref{lne:randForeach2Begin} that the congestion of any vertex not in $C_{out}$ is at most $\tau/2$. It is further straight-forward to see that during a loop-iteration only one path is evaluated at a time and therefore a single vertex can only obtain an addition $\lceil n/h_i \rceil \leq 2n$ units of congestion. Since $2n \leq \tau/2$, we immediately get property 1.

Next, let us analyze the total amount of congestion added in line \ref{lne:addPairCongestion}. Observe therefore that we add $\lceil n / h_i \rceil$ congestion to at most $2(h_i + 1)$ vertices respectively each time a pair $(s,t) \in V \times V$ has its path $\pi^i_{s,t}$ (strictly) improved by a path $\pi^i_{s,c} \circ \pi^i_{c,t}$ through a randomly drawn center vertex $c \in C$. Further, observe that we pick the centers in random order in line \ref{lne:randPickRandom}. Now let us first assume that all paths to and from centers are computed in the graph $G$ (instead of being dependent on the current state of the set $C_{out}$). Let $c$ be the $j^{th}$ chosen center and let us analyze the probability that $\pi^i_{s,c} \circ \pi^i_{c,t}$ is better than the best center seen so far. Clearly, this probability is
\[
    Pr[c \text{ is better than the best center seen so far} ] = \frac{1}{j}.
\]
Thus, the expected amount of congestion $X^i_{s,t}$ added in line \ref{lne:addPairCongestion} for a pair $(s,t)$ in phase $i$ (under the condition that we compute Bellman-Ford always on $G$) is $E[X^i_{s,t}] = \sum_{j = 1}^{|C_{in}|}  \frac{1}{j} 2(h_i + 1)\lceil n / h_i \rceil = O(n H_n)$ where $H_n$ is the $n^{th}$ harmonic number. Since we compute Bellman-Ford on $G[V \setminus C_{out}]$ and vertex deletions might increase distances, we conclude that the random variable $Y^i_{s,t}$ that is the expected amount of congestion added in line \ref{lne:addPairCongestion} for a pair $(s,t)$ using graph $G[V \setminus C_{out}]$ is stochastically dominated by $X^i_{s,t}$. Hence $E[Y^i_{s,t}] \leq O(n \log n)$ and using Markov's inequality, we have with constant probability that $Y^i_{s,t}$ is at most $O(n \log n)$. Using the Chernoff-Bound, we obtain for any $(s,t) \in V^2, i \in [0, i_h]$
\[
    Pr[Y^i_{s,t} > c' * n (\log n)^2] = 1- n^{- (c + 2)}
\]
where $c'$ is a constant chosen to be large. Finally, summing over all these events (for each tuple $(s,t)$ and $i \in [0,i_h]$), a union bound implies that the probability that the total congestion exceeds $O(n^3 (\log n)^3)$ is at most $1 - n^{-c}$. Thus, with probability $1-n^{-c}$, the algorithm satisfies property 2 on termination.

Property 3 follows since each vertex $c \in C_{out}$ has congestion at least $\tau/2$, implying that there can be at most $2\Phi / \tau = O(n^3 \log h/\tau)$ vertices in $C_{out}$. Property 4 follows from Lemma \ref{lma:spaceEfficientBellman}. The running time is strictly dominated by running Bellman-Ford from and to each vertex in $C_{in}$ and at most $C_{out}$ recomputations in the if-case in \cref{lne:ifRandCongestionExceeded}.Thus, the total running time is $O(|C_{in} \cup C_{out}|n^2 h)$ using again a geometric sum argument and Lemma \ref{lma:spaceEfficientBellman}.
\end{proof}

It is straight-forward to verify that the update procedure from section \ref{subsec:handlingDeletions} can be used to recover pre-computed paths that are destroyed by batch deletions. The following lemma formalizes that we can maintain efficiently $h$-hop-improving shortest paths through the set $C_{in}$ in the graph $G \setminus D$ with regard to $G \setminus (D \cup C_{out})$.

\begin{lemma}
\label{lma:updateRandStructure}
Given a data structure that satisfies the properties listed in Lemma \ref{lemma:PebbleSumRand} with congestion threshold $\tau$, with set $C_{in} \subseteq V$ and with a set of congested vertices $C_{out}$, there exists an algorithm that computes for each tuple $(s,t)$ an improving shortest path $\pi_{s,t}$ through $C_{in}$ in $G \setminus D$ with regard to $G \setminus (D \cup C_{out})$ and returns the corresponding distance matrix in time $O(|D|\tau + n^3/h)$.
\end{lemma}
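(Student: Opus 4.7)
The plan is to adapt the phased repair of Algorithm~\ref{alg:delete} (Section~\ref{subsec:handlingDeletions}) to the precomputed paths $\{\pi^i_{s,t}\}$ produced by $\textsc{RandPreprocessing}$, and to finish with a call to $\textsc{DetExtDistances}$ (Lemma~\ref{lemma:ExtendDists}) that extends the resulting $h$-hop-improving through-$C_{in}$ paths to improving ones. Concretely, I invalidate every $\pi^0_{s,t}$ that intersects $D$; then, for $i = 1, \ldots, i_h$ and for each source $s$, I choose a hop-radius $\textsc{Rad}^i(s) \in (h_i/3, 2h_i/3)$ that minimizes the size of $\textsc{Separator}^i(s) = \{x : |\pi^{i-1}_{s,x}| = \textsc{Rad}^i(s)\}$, and for every pair $(s,t)$ with $\pi^i_{s,t} \cap D \neq \emptyset$ I replace $\pi^i_{s,t}$ by the minimum-weight (min-hop tie-broken) concatenation $\pi^{i-1}_{s,x} \circ \pi^{i-1}_{x,t}$ over $x \in \textsc{Separator}^i(s) \cup \{t\}$. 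The bidirectional pointers from Lemma~\ref{lma:spaceEfficientBellman} let me enumerate broken paths in time proportional to their number.

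Correctness follows from the natural analogue of Invariant~\ref{inv:phaseInvariant}: after phase $i$, if there is an $h_i$-hop path $\widehat\pi_{s,t}$ through $C_{in}$ in $G \setminus (D \cup C_{out})$ that realizes the through-$C_{in}$ distance, then $\pi^i_{s,t}$ is a through-$C_{in}$ path in $G \setminus D$ of weight at most this distance, of minimum hop. The base case follows from Lemma~\ref{lemma:PebbleSumRand}(4) together with the level-$0$ invalidation. For the inductive step: if $|\widehat\pi_{s,t}| \leq h_{i-1}$, the candidate $x = t$ copies $\pi^{i-1}_{s,t}$, which already satisfies the invariant; otherwise $\textsc{Rad}^i(s) < h_{i-1} < |\widehat\pi_{s,t}| \leq h_i$, so the $\textsc{Rad}^i(s)$-th vertex $\hat x$ of $\widehat\pi_{s,t}$ lies in $\textsc{Separator}^i(s)$ and the two halves of $\widehat\pi_{s,t}$ have hop at most $h_{i-1}$. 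A pivot $c \in C_{in}$ on $\widehat\pi_{s,t}$ sits on one side of $\hat x$; by the global optimality of $\widehat\pi_{s,t}$ under the through-$C_{in}$ constraint, the through-$C_{in}$ distance on the opposite side of $\hat x$ coincides with the ordinary shortest-path distance there, and in particular equals the weight of that subpath of $\widehat\pi_{s,t}$. Both halves are therefore covered by the inductive hypothesis, so $w(\pi^{i-1}_{s,\hat x}) + w(\pi^{i-1}_{\hat x, t}) \leq w(\widehat\pi_{s,t})$, and the min-selection closes the induction.

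For the running time, the congestion-based accounting from the deterministic case transfers directly: in phase $i$, repairing a broken $\pi^i_{s,t}$ costs $O(|\textsc{Separator}^i(s)|) = O(n/h_i)$, exactly matching the $\lceil n/h_i \rceil$ units that this path contributed to every vertex on it during $\textsc{RandPreprocessing}$. Summing over broken paths and phases, the total repair time is at most $\sum_{v \in D} \textsc{Congestion}(v) \leq |D|\tau$ by Lemma~\ref{lemma:PebbleSumRand}(1), and separator computation fits within the same budget via Lemma~\ref{lma:spaceEfficientBellman}. The final call to $\textsc{DetExtDistances}$ contributes $O(n^3\log n/h + n^2 \log^2 n)$, absorbed into the stated $O(n^3/h)$ term, yielding total time $O(|D|\tau + n^3/h)$.

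The hard part is the through-$C_{in}$ bookkeeping in the inductive step: unlike the deterministic setting, where the stored paths are unrestricted shortest paths and subpath optimality is immediate, here the stored paths only realize through-$C_{in}$ distances, yet both halves of $\widehat\pi_{s,t}$ must be covered by the invariant. The key trick is to argue, via the global optimality of $\widehat\pi_{s,t}$ subject to the through-$C_{in}$ constraint, that the through-$C_{in}$ distance collapses to the ordinary shortest-path distance on the side of $\hat x$ not containing the pivot $c$, so that the inductive hypothesis applies to both halves simultaneously.
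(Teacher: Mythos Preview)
Your overall strategy and running-time accounting match what the paper intends (it simply asserts that the delete procedure of Section~\ref{subsec:handlingDeletions} carries over and gives no further proof). But the correctness argument you supply for the through-$C_{in}$ case has a real gap: the ``collapse'' claim is false. Global optimality of $\widehat\pi_{s,t}$ under the through-$C_{in}$ constraint only tells you that the subpath on the side \emph{without} the pivot, say $\widehat\pi_{s,t}[s,\hat x]$, is an \emph{ordinary} shortest $s$-to-$\hat x$ path; it says nothing about the through-$C_{in}$ distance from $s$ to $\hat x$, which may be strictly larger, or even infinite, when no $s$-to-$\hat x$ path meets $C_{in}$. (Take the directed path $s\to a\to \hat x\to c\to t$ with $C_{in}=\{c\}$: the through-$C_{in}$ distance from $s$ to $\hat x$ is $\infty$.) Since your inductive hypothesis only upper-bounds $w(\pi^{i-1}_{s,\hat x})$ by the through-$C_{in}$ distance, you cannot conclude $w(\pi^{i-1}_{s,\hat x})+w(\pi^{i-1}_{\hat x,t})\le w(\widehat\pi_{s,t})$. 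The same defect already invalidates the claim $\hat x\in\textsc{Separator}^i(s)$: membership is defined via the hop of the stored through-$C_{in}$ path $\pi^{i-1}_{s,\hat x}$, and the minimum-hop through-$C_{in}$ path from $s$ to $\hat x$ need not have $\textsc{Rad}^i(s)$ edges --- in the example above it does not exist at all.

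What actually makes the adaptation go through is a structural fact you did not use: every stored path produced by $\textsc{RandPreprocessing}$ is a concatenation $\pi^i_{s,c}\circ\pi^i_{c,t}$ for some $c\in C_{in}$, and for an endpoint lying in $C_{in}$ the through-$C_{in}$ constraint is vacuous. Hence the families $\{\pi^i_{c,\cdot}\}_{c\in C_{in}}$ and $\{\pi^i_{\cdot,c}\}_{c\in C_{in}}$ are ordinary $h_i$-hop-improving shortest paths, and Invariant~\ref{inv:phaseInvariant} together with the separator-based repair applies to these center-rooted trees verbatim. The repair and the congestion bookkeeping should be organised around these center-rooted pieces rather than around a through-$C_{in}$ invariant stated for arbitrary pairs.
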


Using this new algorithm to preprocess, we maintain data structures $\mathcal{D}_0, \mathcal{D}_1, \dots , \mathcal{D}_{\lceil \lg h \rceil}$ where each data structure $\mathcal{D}_i$ is initialized by invoking procedure $\textsc{RandPreprocessing}(G, C_i, \tau_i, h)$ that returns a set $C_{i+1}$ of congested vertices (here $C_i$ takes the role of $C_{in}$ and $C_{i+1}$ the role of $C_{out}$). Initially, we set $C_0 = V$, and $\tau_i = c * (\log n)^3 * 2^i n^2$ where $c$ is chosen such that $C_{1}$ in Lemma \ref{lemma:PebbleSumRand} is of size at most $n/2$ which also stipulates that $|C_i| \leq n/2^i$ for all $i$. Computing the data structure $\mathcal{D}_i$ once data structure $\mathcal{D}_{i-1}$ is computed is then well-defined. We finish the initialization with the data structures $\mathcal{D}_0, \mathcal{D}_1, \dots , \mathcal{D}_{\lceil \lg h \rceil}$ and a final set $C_{\lceil \lg h \rceil + 1}$ of size $O(n (\log n)^3/h)$.

We now set for each data structure $\mathcal{D}_i$ the number of updates until we recompute the data structure to $\Delta_i = n^{2/3} / 2^i (\log n)$. Observe that the total time to rebuild the data structures amortized over the number of updates can now be bound by 
\[
    \sum_{i = 0}^{\lceil \lg h \rceil} \frac{O(|C_i|n^2 h)}{\Delta_i} = O(n^{2+1/3} (\log n)^4 h)
\]
We can again deamortize by building the data structures in the background using standard techniques incurring at most an additional constant factor in the running time. A subtle detail is that if a data structure at level $i$ is replaced then all data structures at higher levels have to be replaced at the same time so that the sets $C_i$ form the hierarchy proposed in the preprocessing. Since rebuilding the data structures at higher levels can be done more efficiently, this however does only increase the running time by factor $2$.

Finally, let us discuss the delete procedure. Data structure $\mathcal{D}_1$ returns the shortest paths in $G \setminus D$ that does not contain a vertex in $C_1$ by Lemma \ref{lma:updateRandStructure}. Subsequently, each data structure $\mathcal{D}_i$ can be used to find the shortest paths through $C_i$ that do not contain a vertex in $C_{i+1}$. Combining these shortest paths by choosing the one of minimum weight for each tuple $(s,t) \in V^2$, we obtain all shortest paths that do not contain a vertex in $C_{\lceil \lg h \rceil + 1}$. Using Johnson's algorithm as described in Lemma \ref{lma:floydWarshall}, we can reinsert these vertices in time $O(|C_{\lceil \lg h \rceil + 1}|n^2)$ and handle the at most $\Delta_0$ insertions since the data structure $\mathcal{D}_0$ was last build. We can then return all-pairs shortest paths.

The update time to process the batch deletion in all data structures and the time spent to reinsert vertices can be bound by
\[
     (|C_{\lceil \lg h \rceil + 1}| + \Delta_0) * n^2 + \sum_{i = 0}^{\lceil \lg h \rceil} (\Delta_i \tau_i + n^3/h) = O(n^{2+2/3} (\log n)^3 + n^3 \log n/h)
\]
Setting $h = n^{1/3} (\log n)^2$ optimizes the running time. Finally, we point out that after the preprocessing step, each data structure is entirely deterministic. Since the initialization step also fixes a graph version once it starts that it keeps working on, the adversary cannot change the graph to affect the running time of the preprocessing step from that point. This implies that the algorithm even works against a non-oblivious adversary, that is an adversary that has access to the random bits used by the algorithms. We also point out that since we use the adapted Bellman-Ford procedure to store paths, the space of the data structure only differs by logarithmic factors from the space usage of the data structure presented in section \ref{subsec:spaceEfficientDet}.

\begin{corollary}
Let $G$ be an $n$-vertex directed edge-weighted graph undergoing vertex insertions and deletions. Then, there exists a Las-Vegas data structure which can maintain distances in $G$ between all pairs of vertices with update time $O(n^{2+2/3} (\log n)^3)$ w.h.p. using space $\tilde{O}(n^2)$ against a non-oblivious adversary. If the graph is unweighted, the running time can be improved to $O(n^{2+1/2} (\log n)^{3})$.
\end{corollary}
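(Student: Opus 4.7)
The plan is to apply the batch deletion framework of Lemma~\ref{lma:reductionToDecr} with a decremental data structure built as a hierarchy of $O(\log h)$ layers, each layer being an instance of $\textsc{RandPreprocessing}$ invoked with successively smaller center sets. Concretely, I would set $C_0 = V$ and iteratively call $\textsc{RandPreprocessing}(G, C_i, \tau_i, h)$ to obtain data structure $\mathcal{D}_i$ together with the next center set $C_{i+1}$ of congested vertices. By Lemma~\ref{lemma:PebbleSumRand}, picking $\tau_i = c \cdot (\log n)^3 \cdot 2^i n^2$ for a suitable constant $c$ guarantees $|C_{i+1}| \leq |C_i|/2$ w.h.p., so a union bound over the $O(\log h)$ levels yields $|C_i| \leq n/2^i$ for all $i$ simultaneously, and the final residual set $C_{\lceil \lg h \rceil + 1}$ has size $O(n (\log n)^3 / h)$.

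Next I would analyze the rebuild schedule. Since $\mathcal{D}_i$ has preprocessing cost $O(|C_i| n^2 h)$ by Lemma~\ref{lemma:PebbleSumRand}, rebuilding $\mathcal{D}_i$ every $\Delta_i = n^{2/3} / (2^i \log n)$ updates makes the amortized preprocessing cost per level $O(|C_i| n^2 h / \Delta_i) = O(n^{2+1/3} (\log n)^4 h / 4^i)$, and summing over the $O(\log h)$ levels gives a geometric bound of $O(n^{2+1/3} (\log n)^4 h)$ total amortized preprocessing. Standard deamortization by building each $\mathcal{D}_i$ in the background over its $\Delta_i$ updates turns this into a worst-case bound; the one subtlety I would highlight is that rebuilding $\mathcal{D}_i$ forces a rebuild of every higher level (so the sets $C_j$ remain properly nested), but higher levels are geometrically cheaper and this only inflates the running time by a constant factor.

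For the delete procedure I would combine layers as follows: $\mathcal{D}_0$ returns improving shortest paths avoiding $C_1$, and each $\mathcal{D}_i$ returns improving shortest paths through $C_i$ avoiding $C_{i+1}$ via Lemma~\ref{lma:updateRandStructure}; taking pointwise minima over $(s,t)$ yields improving shortest paths in $G \setminus D$ that avoid only $C_{\lceil \lg h \rceil+1}$. I then reinsert $C_{\lceil \lg h \rceil+1}$ together with the $\Delta_0$ pending insertions using Johnson's algorithm (Lemma~\ref{lma:floydWarshall}) at cost $O((|C_{\lceil \lg h \rceil+1}| + \Delta_0) n^2)$. The total update-time budget is then
\[
    (|C_{\lceil \lg h \rceil + 1}| + \Delta_0) n^2 + \sum_{i} \bigl(\Delta_i \tau_i + n^3/h\bigr) = O\bigl(n^{2+2/3}(\log n)^3 + n^3 \log n / h\bigr),
\]
which I balance by setting $h = n^{1/3} (\log n)^2$ to obtain the claimed $O(n^{2+2/3} (\log n)^3)$ bound.

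Finally I would verify robustness and space. Once $\textsc{RandPreprocessing}$ is invoked on a fixed snapshot of $G$, its random choices are sealed and the remaining deterministic delete/extend procedures reveal nothing exploitable, so the adversary may even see the random bits without invalidating the analysis; this gives the non-oblivious guarantee. Space is $\tilde{O}(n^2)$ because each $\mathcal{D}_i$ uses the space-efficient $\textsc{BellmanFordSpaceEfficient}$ representation of Lemma~\ref{lma:spaceEfficientBellman}, yielding only a logarithmic overhead over the $O(n^2)$ distance matrix. The unweighted improvement to $O(n^{2+1/2}(\log n)^3)$ follows by substituting BFS for Bellman--Ford in the preprocessing (saving a factor of $h$ there), then rebalancing $h$, exactly as in the previous corollaries. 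The main obstacle I expect is the concentration argument hidden inside Lemma~\ref{lemma:PebbleSumRand}: ensuring that the geometric shrinkage $|C_{i+1}| \leq |C_i|/2$ holds \emph{simultaneously} across all $O(\log h)$ levels with high probability, so that the rebuild schedule $\Delta_i = n^{2/3}/(2^i \log n)$ remains valid throughout the adversary's update sequence.
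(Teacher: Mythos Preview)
Your proposal is correct and follows essentially the same approach as the paper: the same hierarchy $\mathcal{D}_0,\ldots,\mathcal{D}_{\lceil\lg h\rceil}$ with $\tau_i = c(\log n)^3 2^i n^2$ and $\Delta_i = n^{2/3}/(2^i\log n)$, the same combination of layers via Lemma~\ref{lma:updateRandStructure} followed by Johnson reinsertion of $C_{\lceil\lg h\rceil+1}$, the same deamortization caveat about cascading rebuilds, and the same choice $h = n^{1/3}(\log n)^2$. The only slip is the stray $/4^i$ in your per-level preprocessing cost (the $2^i$ in $|C_i|$ and in $\Delta_i$ cancel, so each level contributes the same amount and the sum is bounded by the number of levels, not by a geometric series), but this does not affect the final bound you state.
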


\section{Conclusion}

In this article, we present the first deterministic data structure that improves upon the longstanding result by Thorup \cite{thorup2005worst}. However, it remains open whether worst-case update time $\tilde{O}(n^{2+2/3})$ can also be achieved deterministically. We point out that one path to derandomize our last data structure is to obtain an \emph{amortized} update-time data structure that maintains hop-restricted shortest paths. Further a fundamental open problem is whether the worst-case update time can be further improved (or if lower bounds can rule out such an improvement). 

Finally, we provided the first space-efficient data structures for the dynamic APSP problem, i.e. the first data structures obtaining $\tilde{O}(n^2)$ space. Further progress towards an algorithm with $\tilde{O}(n^2)$ (amortized) update time and $\tilde{O}(n^2)$ space remains an important open problem.

\section*{Acknowledgements} The authors would like to thank Adam Karczmarz for pointing out an error in \cref{lemma:PebbleSumRand} and for providing a simple fix.

\printbibliography[heading=bibintoc] 

\pagebreak

\appendix
\section{Proof of lemma \ref{lemma:ExtendDists}}
\label{sec:proofLemmaExtDist}

We often use a simple lemma to compute hitting sets deterministically. 

\begin{lemma}[see \cite{thorup2005approximate, roditty2005deterministic}] \label{lma:Separator}
Let $N_1, N_2, \dots, N_n \subseteq U$ be a collection of subsets of $U$, with $u = |U|$ and $|N_i| \geq s$ for all $i \in [1,n]$. Then, we can implement a procedure $\textsc{Separator}(\{ N_i\}_{i \in [1,n]})$ that returns a set $A$ of size at most $O(\frac{u \log n}{s})$ with $N_i \cap A \neq \emptyset$ for all $i$, deterministically in $O(u + \sum_{i} |N_i|)$ time.
\end{lemma}

Below we describe the algorithm presented in \cite{Zwick02}. It is straight-forward to return paths, however, for simplicity, our algorithm only returns distances. Here, $i_{max} = \lceil \log_{3/2} n \rceil$.

\begin{algorithm}
\caption{$\textsc{DeterministicExtendDistances}(\Pi = \{ \pi_{i_h}(s,t) \}_{s,t}, h)$}
\label{alg:detExtendDists}
\KwIn{A collection of paths $\Pi$, that contains a path for each tuple $(s,t) \in V \times V$.}
\KwOut{Returns the set of distances $\{(\textsc{Dist}_{i_{max}}(s,t)\}_{s,t \in V \times V}$.}
\BlankLine
\ForEach{$(s,t) \in V \times V$}{
    $\textsc{Dist}_{i_h}(s,t) \gets w(\pi_{i_h}(s,t))$
}

\For(\label{lne:DetExtendDistsi}){$i \gets i_h + 1$ to $i_{max}$}{
    Compute a set $\textsc{Separator}$ of size $O(n \log n / h_i)$ that contains a vertex from each path in $\Pi$ of hop at least $\lfloor \frac{1}{4} h_i\rfloor$.\label{lne:DetExtendDistsSeparator}\;
    \ForEach{$(s,t) \in V \times V$}{
        $\textsc{Dist}_{i}(s,t) \gets \textsc{Dist}_{i-1}(s,t)$\label{lne:DetExtendDistsSingleAssignment}\;
        \ForEach{$x \in \textsc{Separator}$}{
            $\textsc{Dist}_{i}(s,t) \gets \min\{\textsc{Dist}_{i}(s,t), \textsc{Dist}_{i-1}(s,x) + \textsc{Dist}_{i-1}(x,t)\}$\label{lne:DetExtendDistsSumAssignment}
        }
    }
}
\Return $\{(\textsc{Dist}_{i_{max}}(s,t)\}_{s,t \in V \times V}$
\end{algorithm}

\begin{lemma}[Restatement of \ref{lemma:ExtendDists}]
Given a collection $\Pi$ of the $h$-hop-improving shortest paths for all pairs $(s,t) \in V^2$ in $G \setminus D$, then there exists a procedure \ExtendDists that returns improving shortest path distances for all pairs $(s,t) \in V^2$ in time $O(n^3 \log n/ h + n^2 \log^2 n)$. 
\end{lemma}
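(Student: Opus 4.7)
The plan is to prove the lemma by analyzing Algorithm~\ref{alg:detExtendDists} in two steps. First, I would establish correctness by induction on the outer loop index $i$, maintaining the invariant that $\textsc{Dist}_i(s,t)$ equals the weight of some $h_i$-hop-improving $s$-to-$t$ path in $G\setminus D$, where $h_i = (3/2)^i$. The base case $i=i_h$ is immediate from the input collection $\Pi$, and the target is $i=i_{max}=\lceil\log_{3/2} n\rceil$, since $h_{i_{max}} \geq n$ captures every simple shortest path. For the inductive step, fix $s,t$ and consider a minimum-weight $s$-$t$ path $P^*$ of hop at most $h_i$ in $G \setminus D$: if its hop is already at most $h_{i-1}$, the assignment in line~\ref{lne:DetExtendDistsSingleAssignment} suffices; otherwise its hop $k$ lies in $(h_{i-1},h_i]$, and a pigeonhole argument shows the set of valid split positions $\{p : p \leq h_{i-1}\text{ and } k - p \leq h_{i-1}\}$ on $P^*$ has width $2h_{i-1}-k \geq h_i/3$. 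I would then argue that the separator constructed in line~\ref{lne:DetExtendDistsSeparator} contains at least one vertex in this middle window, so that the minimization in line~\ref{lne:DetExtendDistsSumAssignment} produces $\textsc{Dist}_{i-1}(s,x)+\textsc{Dist}_{i-1}(x,t) \leq w(P^*)$ by the induction hypothesis.

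For the running time, I would bound each iteration separately and sum geometrically. At iteration $i$, the separator has size $O(n\log n / h_i)$, so the double loop over $(s,t) \in V\times V$ and $x \in \textsc{Separator}$ costs $O(n^3\log n / h_i)$. Summing over $i = i_h+1,\ldots,i_{max}$ gives a geometric series dominated by its smallest-$i$ term, which is $O(n^3\log n / h_{i_h}) = O(n^3\log n/h)$. The separator itself is constructed via Lemma~\ref{lma:Separator} with length-parameter $\lfloor h_i/4\rfloor$; by preprocessing $\Pi$ once (so that the list of paths of hop at least $\lfloor h_i/4\rfloor$ needed at each iteration is extracted in time proportional to its size, rather than to $\sum_{\pi \in \Pi}|\pi|$), the cost of all separator computations can be amortized to $O(n^2\log^2 n)$. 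Combined, this yields the claimed bound of $O(n^3\log n/h + n^2\log^2 n)$.

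The main obstacle will be the separator-hits-middle-vertex step in the inductive correctness argument. The difficulty is that while Lemma~\ref{lma:Separator} guarantees hitting each long path in $\Pi$ \emph{somewhere}, we need a vertex specifically in the middle window of a shortest path $P^*$ that may itself not lie in $\Pi$ (in particular when $h_i > h$, so that $P^*$ is longer than any path stored in $\Pi$). I plan to address this by either (i) invoking a stronger middle-hitting variant of Lemma~\ref{lma:Separator} applied to every length-$\lfloor h_i/4\rfloor$ sub-sequence of paths in $\Pi$, then using the fact that subpaths of shortest paths in $G \setminus D$ are themselves shortest paths and hence a middle section of $P^*$ coincides with a sub-sequence covered by the separator, or (ii) amending the interpretation of $\Pi$ so that at each iteration the separator is built from the current collection of witness paths (which can be reconstructed on the fly from the distance tables $\textsc{Dist}_{i-1}$ together with the previous separators). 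Either route should yield the required hitting property without changing the stated running time.
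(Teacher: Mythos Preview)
Your overall plan matches the paper's: induct on $i$, maintain the invariant that $\textsc{Dist}_i(s,t)$ is the weight of an $h_i$-hop-improving shortest path, and sum a geometric series for the running time. You also correctly isolate the central difficulty, namely that the separator is built from $\Pi$ while the path $P^*$ you need to split need not lie in $\Pi$.

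The paper's resolution is close to your option~(i) but with one twist you do not quite state. You write that ``a middle section of $P^*$ coincides with a sub-sequence covered by the separator''; this is false in general, since there may be several shortest $a$-to-$b$ paths and $P^*[a,b]$ need not equal $\pi_{i_h}(a,b)$ as a vertex sequence. The paper instead \emph{substitutes}: it chooses $a,b$ on $P^*$ so that $|P^*[a,b]| = \lfloor h/4\rfloor$ (note the fixed $h$, not $h_i$, so that the subpath has hop at most $h$ and hence $\pi_{i_h}(a,b)\in\Pi$ really is a shortest $a$-$b$ path), replaces $P^*[a,b]$ by $\pi_{i_h}(a,b)$, and then argues about the modified path $P' = P^*[s,a]\circ\pi_{i_h}(a,b)\circ P^*[b,t]$, which has the same weight and---by the minimum-hop choice of $P^*$---the same hop as $P^*$. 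The separator is guaranteed to hit $\pi_{i_h}(a,b)\subseteq P'$ at some $x$, and the position of $a$ (namely $|P^*[s,a]| = \lceil \tfrac{1}{2}|P^*| - \tfrac{1}{8}h\rceil$) is picked so that both $|P'[s,x]|\le h_{i-1}$ and $|P'[x,t]|\le h_{i-1}$. For the $O(n^2\log^2 n)$ term the paper does essentially what your option~(ii) suggests: $\textsc{Separator}^i$ is built hierarchically, hitting only paths from and to vertices of $\textsc{Separator}^{i-1}$. So both of your proposed routes appear, but the substitution of $P^*[a,b]$ by $\pi_{i_h}(a,b)$ is the concrete step that closes the gap in your correctness sketch.
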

\begin{proof}
Let us denote by $\textsc{Separator}^i$ the separator at phase $i$. We compute the initial separator $\textsc{Separator}^{i_h + 1}$ in time $O(n^2 h)$ time by finding a hitting set for all paths of hop at least $\frac 1 4 h$. For  $\textsc{Separator}^i$ with $i > i_h + 1$, we only hit the shortest paths from and to all vertices in $\textsc{Separator}^{i-1}$ of length $\frac 1 4 h_i$. This suffices since the $\textsc{Separator}^{i-1}$ "hit" all paths of length $\frac 1 4 h_{i-1}$ so all paths of hop $\frac 1 4 h_i$ have to go through at least one such vertex. The running time to compute the separator at phase $i$ is thus $O(n \log n/h_i * n h_i) = O(n^2 \log n)$. Using this layering the initial separator $\textsc{Separator}^{i_h + 1}$ can also be computed in time $O(n^2 \log n)$. Once, the separator is computed, each path at iteration $i$ can be computed in time $O(|\textsc{Separator}^i|)$. Using a geometric sum argument, we obtain the claimed running time.

The correctness of the lemma follows if we can establish the loop-invariant for the for-loop in line \ref{lne:DetExtendDistsi} that at the beginning of iteration $i$, for each pair $(s,t)$, if there is a shortest path from $s$ to $t$ in $G \setminus D$ which has hop at most $h_{i-1}$ then $\textsc{Dist}_{i-1}(s,t)$ is the weight of a shortest path from $s$ to $t$ in $G\setminus D$. Here, we prove the claim only for shortest paths, however, an extension to improving shortest-paths is straight-forward.

The proof is by induction on $i\geq i_h+1$. When $i = i_h+1$, the claim is true by our assumption on $\Pi$. Now consider the beginning of an iteration $i\leq i_{\max}$ and assume that the invariant holds at this point. Let $s,t\in V$ be given and assume that there is a shortest path $P$ from $s$ to $t$ in $G\setminus D$ such that $P$ has hop at most $h_i$. Pick $P$ such that its hop is minimized.

If $|P|\leq \frac 2 3 h_i = h_{i-1}$ then in line~\ref{lne:DetExtendDistsSingleAssignment}, $\textsc{Dist}_i(s,t) = \textsc{Dist}_{i-1}(s,t) = w(P) = d_{G\setminus D}(s,t)$ by the induction hypothesis. Since $\textsc{Dist}_i(s,t)$ can never increase, the invariant holds at the beginning of iteration $i+1$ for $s$ and $t$.

Now, assume that $|P| > \frac 2 3 h_i$. Since $i\geq i_h+1$, we have $|P| > \frac 2 3 h_{i_h+1} = h = h$. Let $a$ be the vertex of $P$ such that $|P[s,a]| = \lceil\frac 1 2 |P| - \frac 1 8 h\rceil$ and let $b$ be the vertex of $P[a,t]$ such that $|P[a,b]| = \lfloor\frac 1 4 h\rfloor$. Since $P[a,b]$ is a shortest path of minimum hop, the induction hypothesis implies that $P' = P[s,a]\circ\pi_{i_h}(a,b)\circ P[b,t]$ is a shortest path from $s$ to $t$ in $G\setminus D$ of minimum hop $|P'| = |P|$. Since $|\pi_{i_h}(a,b)| = |P[a,b]| = \lfloor\frac 1 4 h\rfloor$, $\pi_{i_h}(a,b)$ intersects $\textsc{Separator}$ in a vertex $x$. Since $|P'[s,x]|\leq \lceil\frac 1 2 |P| - \frac 1 8 h\rceil + \lfloor\frac 1 4 h\rfloor\leq \frac 2 3 |P|\leq \frac 2 3 h_i = h_{i-1}$, we have $\textsc{Dist}_{i-1}(s,x) = w(P'[s,x])$. Similarly, since $|P'[x,t]|\leq |P| - \lceil\frac 1 2 |P| - \frac 1 8 h\rceil\leq\frac 2 3 |P|\leq h_{i-1}$, we have $\textsc{Dist}_{i-1}(x,t) = w(P'[x,t])$. Hence, in line~\ref{lne:DetExtendDistsSumAssignment}, $\textsc{Dist}_i(s,t)$ is set to $w(P'[s,x]) + w(P'[x,t]) = w(P') = \textsc{Dist}_{G\setminus D}(s,t)$. This shows that the invariant holds at the beginning of iteration $i+1$ for $s$ and $t$, as desired.

At termination, $i = i_{\max}+1$ and the invariant states that for all $s,t\in V$, if there is a shortest path from $s$ to $t$ in $G\setminus D$ and this path has at most $h_{i_{\max}}\geq n$ hops then $\textsc{Dist}_{i_{\max}}(s,t) = \textsc{Dist}_{G\setminus D}(s,t)$. The lemma now follows since any simple shortest path has at most $n-1$ hops.
\end{proof}

\end{document}